\def\qed{\hbox{\rlap{$\sqcap$}$\sqcup$}}
\def\calP{\mathcal{P}}
\def\calT{\mathcal{T}}
\def\calE{\mathcal{E}}
\def\calR{\mathcal{R}}
\def\calF{\mathcal{F}}
\newtheorem{propo}{Proposition}
\newtheorem{cor}{Corollary}
\newtheorem{theorem}{Theorem}[section]
\newtheorem{lemma}{Lemma}[section]
\newenvironment{proof}{\par\noindent{\bf Proof:}}{\mbox{}\hfill$\qed$\\}
\newcommand{\ignore}[1]{ }
\begin{document}
\title{A Polynomial Time Algorithm to Compute an Approximate Weighted Shortest Path}

\author{R. Inkulu\thanks{R.~Inkulu's research was supported in part by NBHM under grant 248(17)2014-R\&D-II/1049.}\footnote{
  Department of Computer Science \& Engineering; IIT Guwahati, India; \texttt{rinkulu@iitg.ac.in}
}
\and
Sanjiv Kapoor\footnote{
  Department of Computer Science; IIT Chicago, USA; \texttt{kapoor@iit.edu}
}}
\date{}
\maketitle

\begin{abstract}
We devise a polynomial-time approximation scheme for the classical geometric problem of finding an approximate short path amid weighted regions.
In this problem, a triangulated region $\calP$ comprising of $n$ vertices, a positive weight associated with each triangle, and two points $s$ and $t$ that belong to $\calP$ are given as the input.
The objective is to find a path whose cost is at most $(1+\epsilon)$OPT where OPT is the cost of an optimal path between $s$ and $t$.
Our algorithm initiates a discretized-Dijkstra wavefront from source $s$ and progresses the wavefront till it strikes $t$.
This result is about a cubic factor (in $n$) improvement over the Mitchell and Papadimitriou '91 result \cite{journals/jacm/MitchellP91}, which is the only known polynomial time algorithm for this problem to date.
Further, with polynomial time preprocessing of $\calP$, a map is computed which allows answering approximate weighted shortest path queries in polynomial time.
\end{abstract}

\section{Introduction}
\label{sect:intro}

The problem of computing a shortest path in polygonal subdivisions is important and well-studied due to its applications in geographic information systems, VLSI design, robot motion planning, etc.
A survey of various shortest path problems and algorithms may be found in Mitchell \cite{hb/cg/Mitch98}.
In this paper, we devise an algorithm for the {\it weighted shortest path problem} \cite{journals/jacm/MitchellP91}: given a triangulation $\calP$ with $O(n)$ faces, each face associated with a positive weight, find a path between two input points $s$ and $t$ (both belonging to $\calP$) so that the path has minimum cost among all possible paths joining $s$ and $t$ that lie on $\calP$. 
The cost of any path $P$ is the sum of costs of all line segments in $P$, whereas the cost of a line segment is its Euclidean length multiplied by the weight of the face on which it lies.
The weighted shortest path problem helps in modeling region specific constraints in planning motion.

To compare the time complexities of various algorithms in the literature, we use the following notation -
$n$: number of vertices defining $\calP$; $l_{max}$: length of the longest edge bounding any face of $\calP$; $N$: maximum coordinate value used in describing $\calP$; $w_{max}$: maximum non-infinite weight associated with any triangle; $w_{min}$: minimum weight associated with any triangle; $\theta_{min}$: minimum among the internal face angles of $\calP$; and, $\mu$: ratio of $w_{max}$ to $w_{min}$.
(Note that the same notation is used in later parts of the paper as well.)

Mitchell and Papadimitriou \cite{journals/jacm/MitchellP91} presented an algorithm that finds an approximate weighted shortest path in $O(n^8 \lg{\frac{nN\mu}{\epsilon}})$ time.
Their algorithm essentially builds a shortest path map for $s$ by progressing continuous-Dijkstra wavefront in $\calP$ using the Snell's laws of refraction.
By introducing $m$ equi-spaced Steiner points on each edge of $\calP$ and building a graph spanner over these points, Mata and Mitchell \cite{conf/compgeom/MataM97} devised a preprocessing algorithm to construct a graph spanner in $O(kn^3)$ time, where $k = O(\frac{\mu}{\epsilon \theta_{min}})$, on which $(1+\epsilon)$-approximate weighted shortest path queries are performed.
Lanthier et~al. \cite{journals/algorithmica/LanthierMS01} independently devised an $O(n^5)$ time approximation algorithm with an additive error of $O(l_{max} w_{max})$ by choosing $m = n^2$.
Instead of uniform discretization (as in \cite{conf/compgeom/MataM97}), Aleksandrov et~al. \cite{conf/swat/AleksandrovLMS98,journals/jacm/AleksandrovMS05} used logarithmic discretization and devised an $O(\frac{nN^2}{\sqrt{\epsilon}}\lg(\frac{NW}{w})\lg{\frac{n}{\epsilon}}\lg{\frac{1}{\epsilon}})$ time approximation algorithm.
Sun and Reif \cite{journals/jal/SunR06} provided an approximation algorithm, popularly known as BUSHWHACK, with time complexity $O(\frac{nN^2}{\epsilon}\lg({N\mu})\lg{\frac{n}{\epsilon}}\lg{\frac{1}{\epsilon}})$.
Their algorithm dynamically maintains for each Steiner point $v$, a small set of incident edges of $v$ that may contribute to an approximate weighted shortest path from $s$ to $t$.
More recently, Cheng et al. \cite{conf/soda/ChengJV15} devised an approximation algorithm that takes $O(\frac{kn + k^4 \lg(k/\epsilon)}{\epsilon}\lg^2{\frac{\rho n}{\epsilon}})$ time, where $k$ is the smallest integer such that the sum of the $k$ smallest angles in $\calP$ is at least $\pi$.
The query version of this problem is addressed in \cite{conf/swat/AleksandrovLMS98,journals/jacm/AleksandrovMS05,journals/jal/SunR06,journals/jacm/MitchellP91,journals/siamcomp/ChengNVW10,journals/dcg/AleksandrovDGMNS10}.
Further, algorithms in Cheng et~al. \cite{journals/siamcomp/ChengNVW10} handle the case of measuring the cost of path length in each face with an asymmetric convex distance function. 

\subsubsection*{Our contribution}

The time complexities of each of the above mentioned solutions, except for \cite{journals/jacm/MitchellP91}, are polynomial in $n$ as well as in parameters such as $\epsilon, \theta_{min}, N$ and $\mu$.
Hence, strictly speaking, these algorithms are not polynomial.
Like \cite{journals/jacm/MitchellP91}, this paper devises an algorithm that is polynomial in time complexity.
The time complexity of our algorithm is $O(n^5(\lg{\frac{n}{\epsilon}})(\lg{\frac{\mu}{\sqrt{\epsilon}}}))$, which is about a cubic factor improvement from \cite{journals/jacm/MitchellP91}.
As established in \cite{journals/jacm/MitchellP91}, there are  $\Omega(n^4)$ events that need to be handled in order to find the interaction of shortest path map with the $\calP$.
Our algorithm takes first steps to provide a solution that is sub-quadratic in the number of events.

This result uses several of the characterizations from \cite{journals/jacm/MitchellP91} in order to design a simple and more efficient algorithm.
Every ray is a simple path in $\mathcal{P}$.
Our approach discretizes the wavefront in the continuous-Dijkstra's approach by a set $S$ of rays whose origin is source $s$.
These rays are distributed uniformly around $s$.
As the discrete wavefront propagates, a subset of the rays in $S$ are progressed (traced) further while following the Snell's laws of refraction.
Each of these subsets of rays is guided by two extreme rays from that subset, i.e., all the rays in that subset lie between these two special rays. 
Essentially, each such subset represents a section of the wavefront and is called a {\em bundle}.
For any vertex $v$ in $\calP$, whenever such a section of the wavefront strikes a vertex $w$, we initiate another discrete-wavefront (set of rays) from $w$.
We continue doing this until the wavefront (approximately) strikes $t$.
In summary our contributions are:
\begin{enumerate}
\item A discretized approach to propagating wavefronts using bundles.
\item
An algorithm that computes an $(1+\epsilon)$-approximate weighted shortest path from $s$ to $t$ in $O(n^5(\lg{\frac{n}{\epsilon}})(\lg{\frac{\mu}{\sqrt{\epsilon}}}))$ time.
\item
Further, we preprocess $\calP$ in $O(n^5(\lg{\frac{n}{\epsilon}})(\lg{\frac{\mu}{\sqrt{\epsilon}}})(\lg{N}))$ time to compute a data structure for answering single-source approximate weighted shortest path queries in $O(n^4(\lg{\frac{n}{\epsilon}})(\lg{\frac{\mu}{\sqrt{\epsilon}}})(\lg{N}))$ time.
\end{enumerate}

Section~\ref{sect:prelim} lists relevant propositions from \cite{journals/jacm/MitchellP91} and \cite{journals/jal/SunR06}, and defines terminology required to describe the algorithm. 
Section~\ref{sect:algooutline} outlines the algorithm while introducing few structures used in the algorithm.
In Section~\ref{sect:boundrays}, we bound the number of rays.
The details of the algorithm are provided in Section~\ref{sect:algodetails}.
Section~\ref{sect:interpol} describes an interpolation scheme to improve the time complexity of the algorithm. 
Section~\ref{sect:analysis} argues for the correctness and analyzes the time complexity of the algorithm.
And, the conclusions are given in Section~\ref{sect:conclu}.

\section{Preliminaries}
\label{sect:prelim}

We define the problem using the terminology from \cite{journals/jacm/MitchellP91}.

We assume a planar subdivision ${\cal P}$, that is polygonal and specified by triangular faces.
Each face $f$ has a weight $w_f$ associated with it. 
We denote the weight of a face $f$ (resp. edge $e$) with $w_f$ (resp. $w_e$).
For an edge $e$ shared by faces $f'$ and $f''$, the weight of $e$ is defined as $\min(w_{f'}, w_{f''})$.

A {\it path} is a continuous image of an interval, say $[0, 1]$, in the plane.
A {\it geodesic path} is a path that is locally optimal and cannot, therefore, be shortened by slight perturbations.
An {\it optimal path} is a geodesic path that is globally optimal.
The general form of a weighted geodesic path is a simple (that is, not self-intersecting) piecewise linear path that goes through zero or more vertices while possibly crossing a zero or more edges.
The Euclidean length of a line segment $l$ is denoted by $\Vert l \Vert$.
Let $P$ be a geodesic path with line segments $l_1, l_2, \ldots, l_k$ such that $l_i$ lies on face $f_i$, for every $i$ in $[1, k]$; then the {\it weighted Euclidean distance} (also termed the {\it cost}) of path $P$ is defined as the $\sum_i \Vert l_i \Vert w_{f_i}$.\\ \\
{\bf Weighted Shortest Path problem:}
{\em Given a finite triangulation $\calP$ in the plane with two points $s$ (source) and $t$ (destination) located on $\calP$, an assignment of positive integral weights to faces of $\calP$, and an error tolerance $\epsilon$, the {\it approximate weighted shortest path} problem is to determine a path $P$ from $s$ to $t$ that lies on $\calP$ such that the cost of $P$ is at most $(1+\epsilon)$ times the cost of an optimal path from $s$ to $t$.}\\

We assume that all input parameters of the problem are specified by integers.
In particular, all vertices have non-negative integer coordinates.
Further, we assume that $s$ and $t$ are two vertices of $\calP$.
A {\it ray} in our algorithm is a piecewise linear simple path in $\mathcal{P}$ such that endpoints of each of the line segments that it contains lie on the edges of $\calP$.
Let $f'$ be the first face traversed by a ray $r'$ and let $f''$ be the first face traversed by a ray $r''$.
The angle between $r'$ and $r''$ is defined as the angle between the vectors induced by $f' \cap r'$ and $f'' \cap r''$.
Unless specified otherwise, all angles are acute.
Let $\overrightarrow{v}$, $\overrightarrow{v_1}$ and $\overrightarrow{v_2}$ be three unit vectors that originate from a point $a$ in $\mathbb{R}^2$ such that the vectors $\overrightarrow{v_1}, \overrightarrow{v_2}$ lie in the same half-plane defined by the line induced by $\overrightarrow{v}$.
The {\it cone} $C(a, \overrightarrow{v_1}, \overrightarrow{v_2})$ is the set comprising of all the points that are positive linear combinations of $\overrightarrow{v_1}$ and $\overrightarrow{v_2}$.

\subsection*{A few facts from the literature} 

First we list few propositions, definitions and descriptions from Mitchell and Papadimitriou \cite{journals/jacm/MitchellP91} tailored for our purpose.

A sequence of edge-adjacent faces is a list, $(f_1, f_2, \ldots, f_{k+1})$, of two or more faces such that, for every $i$, face $f_i$ shares edge $e_i$ with face $f_{i+1}$.
Further, the corresponding sequence of edges $\calE = (e_1, e_2, \ldots, e_k)$ is referred as an {\it edge sequence}.
When a geodesic path $P$ crosses edges in $\calE$ in the order specified by $\calE$ and without passing through any vertex, then $\calE$ is the {\it edge sequence of path $P$}.
A geodesic path $p_0, p_1, \ldots, p_k$ in $\mathcal{P}$, is termed a {\it ray} $r$ as it behaves similar to a  ray of light.
Further, when the geodesic path is extended to add a  point $p_{k+1}$ in $\mathcal{P}$, by the  line segment $p_kp_{k+1}$, the ray $r$ is said to be {\it traced} (or, {\it progressed}) to $p_{k+1}$.
Any point or line segment in $\cal{P}$ from which at least one ray is initiated is termed a {\it source}.

	\begin{wrapfigure}{r}{0.5\textwidth}
	\centering
	\begin{minipage}[t]{\linewidth}
	\centering
	\includegraphics[totalheight=0.65in]{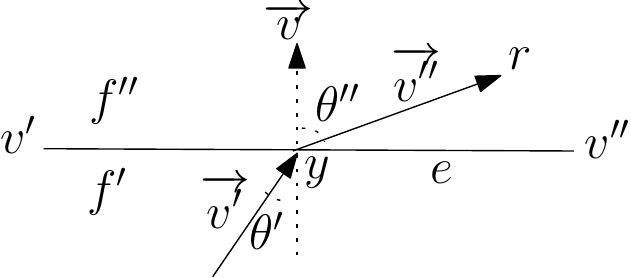}
	\vspace{-0.15in}
	\caption{\footnotesize Illustrating refraction}
	\label{fig:refraction}
	\end{minipage}
	\end{wrapfigure}

Let $f'$ and $f''$ be two faces with shared edge $e(v', v'')$.
(See Fig. \ref{fig:refraction}.)
Let $l', l''$ be two successive line segments along a ray $r$ with $l'$ lying on face $f'$ and $l''$ lying on face $f''$ with point $y \in l' \cap l'' \cap e$.
Let $\overrightarrow{v'}$ be a vector entering $y$ and containing $l'$ and let $\overrightarrow{v''}$ be a vector with origin $y$ and containing $l''$.
Also, let $\overrightarrow{v}$ be a vector normal to edge $e$, passing through point $y$ to some point in face $f''$. 
The angle $\theta'$ between $\overrightarrow{v}$ and $\overrightarrow{v'}$ is known as the {\it angle of incidence} of $r$ onto $e$.
And, the angle $\theta''$ between $\overrightarrow{v}$ and $\overrightarrow{v''}$ is known as the {\it angle of refraction} of $r$ from $e$.

	\begin{wrapfigure}{r}{0.5\textwidth}
	\centering
	\begin{minipage}[t]{\linewidth}
	\centering
	\includegraphics[totalheight=0.7in]{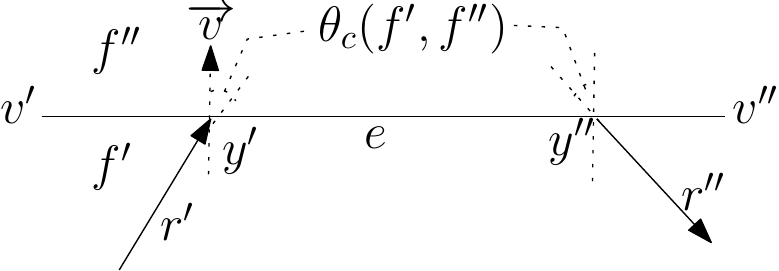}
	\caption{\footnotesize Illustrating critical reflection}
	\label{fig:critrefl}
	\end{minipage}
	\end{wrapfigure}

When $w_{f'} > w_{f''}$, the {\it critical angle of $e$}, denoted by $\theta_c(f', f'')$ is $\sin^{-1}(\frac{w_{f''}}{w_{f'}})$.
(When $f'$ and $f''$ are understood from the context, the critical angle of $e$ is also denoted by $\theta_c$.)
(See Fig. \ref{fig:critrefl}.)
If $\theta' < \theta_c(f', f'')$, then the angles $\theta'$ and $\theta''$ are related by Snell's law of refraction with $w_{f'} \sin{\theta'} = w_{f''} \sin{\theta''}$.
Since there does not exist a geodesic weighted shortest path with $\theta' > \theta_c(f', f'')$, we only need to consider the case in which $\theta'$ equals to $\theta_c(f', f'')$. 
Let $y'$ be the point of incidence of ray $r'$ from face $f'$ onto $e$ with angle of incidence $\theta_c(f', f'')$.
Also, let $\overrightarrow{v}$ be a vector normal to edge $e$, passing through point $y'$ to some point in face $f''$.
Then the geodesic path travels along $\overrightarrow{y'v''}$ for some positive distance before exiting edge $e$ back into face $f'$, say at a point $y''$ located in the interior of $y'v''$, while making an angle $-\theta_c(f', f'')$ with $\overrightarrow{v}$. 
We say that the path $P$ is {\it critically reflected} by edge $e$ and the line segment $y'y''$ is termed a {\it critical segment} of path $P$ on $e$.
Sometimes, we also say $y'v''$ is a critical segment corresponding to the critcial incidence of a ray at $y'$ from face $f'$.
The point $y'$ (the closer of the two points $\{y', y''\}$ to $s$) is known as a {\it critical point of entry} of path $P$ to edge $e$ and $y''$ is known as the corresponding {\it critical point of exit} of path $P$ from edge $e$.

The following propositions from the literature are useful for our algorithm.

\begin{propo}[Lemma~3.7, \cite{journals/jacm/MitchellP91}]
\label{prop:betwcrit}
Let $P$ be a geodesic path.
Then either (i) between any two consecutive vertices on $P$, there is at most one critical point of entry to an edge $e$, and at most one critical point of exit from an edge $e'$ (possibly equal to $e$); or
(ii) the path $P$ can be modified in such a way that case (i) holds without altering the length of the path. 
\end{propo}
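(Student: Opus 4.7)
The plan is to show that any configuration of $P$ violating case (i) can be locally deformed, preserving the weighted length, into one that satisfies (i). Suppose for contradiction that between two consecutive vertices $u, v$ of $P$ there are two distinct critical reflection events: an entry $y_1'$ and exit $y_1''$ on an edge $e_1$, followed by an entry $y_2'$ and exit $y_2''$ on an edge $e_2$ (possibly $e_1 = e_2$). Since the subpath from $y_1''$ to $y_2'$ is vertex-free, the geodesic property forces it to be a chain of straight segments joined by Snell refractions across some intermediate edge sequence. I would parameterize a one-parameter family of perturbations by sliding the exit $y_1''$ by $\delta$ along $e_1$ and propagating the change through this refractive chain to a displaced entry point on $e_2$. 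For $|\delta|$ sufficiently small, the incidence angles stay strictly below critical on all intermediate edges and at the critical value at the two endpoints (the one-sided degenerate Snell limit), so the perturbed curve is still a valid geodesic of the same combinatorial type.

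The next step is to show that the weighted length $L(\delta)$ of the perturbed subpath is constant. For a Snell-refracting chain whose endpoints slide along given lines, the standard first-variation identity (essentially Fermat's principle applied segment by segment, in the form used throughout \cite{journals/jacm/MitchellP91}) says that $dL$ equals the tangential component of the ``optical momentum'' $w_f \, \hat r$ at each endpoint times the endpoint displacement. At a critical point, Snell's law gives $w_{f'} \sin \theta_c = w_{f''}$, which is exactly the per-unit cost at which the along-edge critical segment accumulates length. Thus the boundary contribution of the refractive subpath at $y_1''$ cancels the length change of the critical segment $y_1' y_1''$, and symmetrically at $y_2'$, giving $L'(\delta) \equiv 0$. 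I then push $\delta$ monotonically until the shorter of the two critical segments collapses to a point, producing a path of the same total weighted length with at most one critical reflection event between $u$ and $v$, which is case (ii).

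The main obstacle is to justify that $\delta$ can be driven that far without the face sequence of the refractive chain changing or a new vertex of $\calP$ being encountered. A continuity argument on the implicit function defining the chain, obtained by unfolding its face sequence into a strip so that the chain becomes a straight line, handles the first issue and shows in addition that the induced motion of the entry point on $e_2$ is nonzero, so the deformation is genuinely nontrivial. If a vertex of $\calP$ is hit before a critical segment collapses, that vertex simply becomes a new consecutive vertex of the modified path; between each new adjacent pair of consecutive vertices the hypothesis of (i) can be reapplied, so the argument iterates at most finitely many times (bounded by the number of critical events originally on $P$) to complete the reduction.
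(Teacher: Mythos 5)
First, note that the paper does not prove this statement at all: it is imported verbatim as Lemma~3.7 of \cite{journals/jacm/MitchellP91}, so there is no in-paper proof to compare against. Measured against the original source, your argument is essentially the right one and matches the classical proof in spirit: the whole content of the lemma is the length-preserving slide, driven by the identity $w_{f'}\sin\theta_c = w_{f''} = w_e$ (the tangential ``optical momentum'' of a critically incident ray equals the per-unit cost of travelling along the edge, since the edge weight is $\min(w_{f'},w_{f''})$), which makes the first variation of (critical segment) $+$ (refractive chain) $+$ (critical segment) vanish identically. Your bookkeeping for termination (iterate, splitting the interval whenever the sliding chain sweeps over a vertex of $\calP$) is also sound.

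Two points need repair. First, your justification that the deformation is well defined and nontrivial --- ``unfolding the face sequence into a strip so that the chain becomes a straight line'' --- does not work in the weighted setting: an isometric unfolding straightens geodesics only when all face weights are equal; a Snell-refracting chain remains bent in any unfolding. The fact you actually need is more elementary and is used elsewhere in this very paper: two rays leaving an edge at the \emph{same} angle and crossing the same edge sequence stay parallel within each face (Snell's law depends only on the incidence angle), so sliding $y_1''$ along $e_1$ produces a one-parameter family of mutually parallel, non-crossing chains whose far endpoint $y_2'$ moves continuously and monotonically along $e_2$ with the same incidence angle; this simultaneously gives nontriviality, preservation of the critical incidence at $e_2$, and control of when the face sequence changes. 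Second, the endgame deserves a sentence: when a critical segment collapses, the limiting configuration is an incoming and an outgoing segment meeting at a single point of $e_1$ at angles $\pm\theta_c$, i.e.\ a corner of interior angle $2\theta_c<\pi$ inside the heavier face. Under the definition used here (a critical segment has positive length, the path travelling along the edge ``for some positive distance''), this degenerate touch is no longer a critical point of entry or exit, so the count does drop and case~(ii) holds; but the modified path is no longer locally optimal (the corner can be shortcut), which is consistent with the statement of~(ii) but should be said explicitly --- in particular, for a globally optimal $P$ this observation upgrades the conclusion to: either case~(i) already holds, or the slide terminates at a vertex before any segment collapses.
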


Let $f'$ and $f''$ be two faces with shared edge $e$.
For any point $x \in e$, a {\it locally $f'$-free path} strikes $x$ from the exterior of face $f'$ and is locally optimal.

\begin{propo}[Lemma~7.1, \cite{journals/jacm/MitchellP91}]
\label{prop:edgeseqlen}
For a face $f$ of $\cal{P}$, let $P$ be a shortest locally $f$-free path.
Let $P'$ be a sub-path of $P$ such that $P'$ goes through no vertices or critical points.
Then, $P'$ can cross an edge $e$ at most $O(n)$ times.
Thus, in particular, the cardinality of any edge sequence of path $P$ is $O(n^2)$.
\end{propo}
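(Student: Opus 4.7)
My plan is to bound the number of times the sub-path $P'$ can cross a fixed edge $e$ by combining the determinism of Snell refraction along $P'$ with the bounded combinatorial complexity of $\calP$. Because $P'$ passes through no vertices and no critical points, every crossing of an edge of $\calP$ along $P'$ is a standard Snell refraction uniquely determined by the incident direction and the weights of the two adjacent faces. I will use this rigidity, together with the local optimality inherited from $P$, to control how often $P'$ can revisit $e$.

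First, I would enumerate the crossings of $e$ by $P'$ as $x_1, x_2, \ldots, x_m$ in the order they occur along $P'$, and attach to each consecutive pair $(x_i, x_{i+1})$ an \emph{excursion arc} lying on one side of $e$: it begins in a face incident to $e$, undergoes a sequence of Snell refractions, and returns to $e$. Each excursion carries a combinatorial \emph{signature}, namely the sequence of faces it traverses. I would next argue that only $O(1)$ distinct angles of incidence can produce an excursion with a prescribed signature: unfolding the excursion across its refraction edges (with a rescaling that encodes the weight ratios) turns the constraint ``return to $e$'' into a bounded-degree algebraic condition on the incident angle, so there are $O(1)$ admissible solutions per signature.

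The technical heart of the argument, and the main obstacle I anticipate, is controlling how many distinct signatures can arise among excursions of a single locally optimal path. Two excursions that share both a signature and an incident angle would be congruent via translation along $e$; concatenating them in the same locally optimal path would either produce a self-intersection (contradicting simplicity of geodesics) or admit a local shortcut by a parallel-sliding argument, contradicting local optimality of $P$. This reduces the problem to counting feasible face sequences around $e$, which is bounded by $O(n)$ using the fact that $\calP$ has $O(n)$ faces and $O(n)$ edges and that an optimal path cannot visit many faces per excursion without creating a shortenable loop.

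Once the $O(n)$ crossings-per-edge bound is established, the ``in particular'' conclusion is immediate. Any maximal edge sequence of $P$ lies between two consecutive vertices, which by Proposition~\ref{prop:betwcrit} decomposes into $O(1)$ pieces of the form $P'$ (separated by at most one critical entry and one critical exit). Each such piece crosses each of the $O(n)$ edges of $\calP$ at most $O(n)$ times, yielding an edge sequence of length $O(n^2)$ in total.
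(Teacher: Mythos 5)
The paper gives no proof of this proposition; it is imported verbatim as Lemma~7.1 of Mitchell and Papadimitriou, so the only meaningful comparison is with the argument in that reference. That argument is a topological charging argument: between two consecutive crossings of $e$, the sub-path of $P'$ together with the intervening sub-segment of $e$ bounds a simple closed region; local optimality forces that region to contain a vertex of $\calP$ (otherwise the loop could be shortened), and simplicity of the geodesic makes these regions pairwise non-crossing, so each of the $O(n)$ vertices is charged $O(1)$ times. Your proposal goes a genuinely different way, but its two load-bearing claims do not hold up. First, the claim that a prescribed face signature admits only $O(1)$ incident angles via ``unfolding'' is unsupported: unfolding is an isometry trick for reflections, whereas Snell refraction is not an isometry, and the condition obtained by composing the relations $w_i\sin\theta_i = w_{i+1}\sin\theta_{i+1}$ along the excursion has degree growing with the excursion's length, so there is no signature-independent $O(1)$ bound on the number of solutions. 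Indeed, excursions with a fixed signature are not rigid at all: sliding the incidence point along $e$ gives a one-parameter family of locally optimal excursions with the same signature, so the congruence/parallel-sliding contradiction you invoke does not arise.

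Second, and more importantly, the count of distinct signatures is exactly the content of the lemma, and you assert rather than prove that it is $O(n)$. The statement ``an optimal path cannot visit many faces per excursion without creating a shortenable loop'' is false as written: a single excursion of a shortest locally $f$-free path can legitimately cross $\Theta(n)$ distinct edges, each many times, and be combinatorially long. What caps the number of excursions is not the length of any one of them but the enclosure-and-charging mechanism above (each excursion's loop must enclose a vertex, and the loops are non-crossing); without some such mechanism your reduction bounds the crossings only by the number of feasible face sequences, which is a priori exponential, not $O(n)$. The final step, deducing the $O(n^2)$ bound on the cardinality of an edge sequence from the per-edge bound of $O(n)$ over the $O(n)$ edges of $\calP$, is fine.
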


\begin{propo}[Lemma~7.4, \cite{journals/jacm/MitchellP91}]
\label{prop:numcritsrc}
There are at most $O(n)$ critical points of entry on any given edge $e$.
\end{propo}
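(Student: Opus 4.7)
The plan is to exploit the rigidity of critical incidence and reduce the count to something governed by the $O(n)$ vertices of $\calP$. The first observation is that the critical angle $\theta_c(f',f'')$ on edge $e$ shared by faces $f'$ and $f''$ (with $w_{f'}>w_{f''}$) depends only on the weights of the two incident faces. Hence every ray striking $e$ critically from $f'$ meets $e$ at the same fixed angle with the normal to $e$, so all such terminal segments inside $f'$ are mutually parallel. The symmetric statement holds from $f''$ when $w_{f''}>w_{f'}$. Thus the critical points of entry on $e$ from a fixed side are the tips of a discrete parallel pencil of terminal ray segments in that face.

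Next, I would trace each such critical ray backwards into $\calP$. Within a fixed edge sequence, Snell's law applied in reverse rigidly determines the backward trajectory once the direction of entry at $e$ is fixed. Since every ray in our parallel pencil shares that direction, two backward traces that follow the same edge sequence differ only by a translation parallel to $e$, so they cannot both originate at the same source $s$. Consequently, distinct critical entry points on $e$ must correspond to distinct backward edge sequences from $s$ to $e$ that end in critical incidence.

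It then suffices to bound by $O(n)$ the number of such edge sequences. Here I would invoke Proposition~\ref{prop:betwcrit}: between any two consecutive vertices encountered by a geodesic path there is at most one critical point of entry to $e$. Thus every critical entry on $e$ can be charged to the \emph{last} vertex of $\calP$ that its geodesic path grazes before striking $e$ critically (or to $s$ itself if no such vertex exists). Two distinct critical entries must have distinct such ``final vertex witnesses,'' because the terminal portion of a geodesic path, between its last vertex event and its critical incidence at $e$, is determined once both the starting vertex and the terminal direction (fixed by $\theta_c$) are prescribed. Since $\calP$ has $O(n)$ vertices, this gives at most $O(n)$ critical entries from each side of $e$, and summing over the two sides preserves the bound.

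The main obstacle will be the middle step and the injectivity of the vertex-charging argument: rigorously ruling out two distinct critical entries whose backward traces share an edge sequence, and verifying that the map from a critical entry to its final vertex witness is indeed one-to-one. Both issues should yield to a careful application of the local optimality of geodesic paths (via Snell's law) together with the simplicity and non-self-intersection of geodesic paths recalled in the Preliminaries; one may also need to invoke the second clause of Proposition~\ref{prop:betwcrit} to reroute degenerate cases without altering path length.
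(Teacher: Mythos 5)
First, note that the paper does not prove this statement at all: it is imported verbatim as Lemma~7.4 of \cite{journals/jacm/MitchellP91}, so there is no in-paper proof to match your argument against; I can only assess your proposal on its own terms and against the cited source. Your opening observation is sound and is indeed the right starting point: all critical incidences on $e$ from the heavier face occur at the fixed angle $\theta_c$, so the terminal segments form (at most two) parallel pencils, and for a \emph{fixed} edge sequence the backward Snell trace of such a pencil stays parallel, so at most one member can pass through any given point or vertex. The gap is in everything you build on top of this. Your middle step reduces the count to the number of distinct backward edge sequences from $s$ to $e$ ending in critical incidence, but that quantity is not $O(n)$ and nothing in the paper bounds it by $O(n)$; Proposition~\ref{prop:edgeseqlen} bounds the \emph{length} of an edge sequence by $O(n^2)$, not the number of edge sequences, and the $\Omega(n^4)$ event bound quoted in the introduction signals that geodesics from a single source genuinely realize many edge sequences. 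So that reduction bounds nothing.

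The vertex-charging step then fails for the same reason: your injectivity claim --- that the terminal portion of the path is determined by the last vertex witness together with the terminal direction $\theta_c$ --- is false as stated, because a one-parameter family of refraction paths leaving a vertex $v$ sweeps across many different edge sequences terminating at $e$, and \emph{each} such edge sequence can contain exactly one path achieving the critical angle (the exit angle is a monotone function of the initial angle only within a fixed edge sequence). Hence a single vertex can a priori witness many critical entries, and the map you define is not obviously one-to-one. Closing this requires an additional geometric ingredient you do not supply: using the non-crossing property (Proposition~\ref{prop:noncrossing}) for the pencil of geodesics rooted at $v$, one must argue that two non-crossing paths from $v$ that both strike $e$ at angle $\theta_c$ but via different edge sequences enclose a region that must contain a vertex of $\calP$, so that distinct critical entries can be separated by, and charged to, distinct vertices --- which is essentially what the Mitchell--Papadimitriou proof does. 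A second unaddressed case: the last event before the critical incidence need not be a vertex; it can be a critical point of exit from another edge $e'$ (reached from a vertex lying \emph{on} $e'$), and Proposition~\ref{prop:betwcrit} permits one exit and one entry between consecutive vertices. The exiting rays from $e'$ form a parallel continuum, all of which strike $e$ at the \emph{same} angle for a fixed edge sequence, so either none or an entire interval of them is critical; this degenerate case must be handled separately rather than folded into the vertex count.
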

Finally we have a  proposition that provides us the {\it non-crossing property of (weighted) shortest paths}.
\begin{propo}[Lemma~1, \cite{journals/jal/SunR06}]
\label{prop:noncrossing}
Any two geodesic shortest paths that originate from the same point in $\cal{P}$ cannot intersect in the interior of any weighted region of $\cal{P}$.
\end{propo}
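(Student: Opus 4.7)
The plan is to argue by contradiction via a standard path-exchange. Suppose two geodesic shortest paths $P_1$ and $P_2$ both originate from $s$ (ending at points $p_1$ and $p_2$ respectively) and meet at some point $x$ lying strictly in the interior of a single weighted face $f$. Write $P_1 = P_1^a \cdot P_1^b$ and $P_2 = P_2^a \cdot P_2^b$, where the superscript $a$ denotes the subpath from $s$ to $x$ and $b$ the subpath from $x$ to the endpoint.

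The first step is to use the optimality of $P_1$ to conclude that $\|P_1^a\| \le \|P_2^a\|$: otherwise, $P_2^a \cdot P_1^b$ would be a strictly shorter path from $s$ to $p_1$, since both pieces are joined at the common point $x$. Symmetrically, $\|P_2^a\| \le \|P_1^a\|$, so $\|P_1^a\| = \|P_2^a\|$. Consequently the swapped paths $Q_1 := P_2^a \cdot P_1^b$ and $Q_2 := P_1^a \cdot P_2^b$ are themselves optimal (have the same weighted cost as $P_1$ and $P_2$ respectively).

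The heart of the argument is then to obtain a contradiction with the local optimality of $Q_1$ (or $Q_2$) at $x$. Because $P_1$ and $P_2$ are geodesic and distinct in a neighborhood of $x$, and because $x$ lies in the interior of $f$ (so both paths are straight line segments on either side of $x$ inside $f$), the outgoing tangent direction of $P_1^b$ at $x$ differs from the outgoing tangent direction of $P_2^a$ at $x$. Hence $Q_1$ makes a genuine nondegenerate bend at $x$, while all of $Q_1$ near $x$ lies inside the single face $f$ of uniform weight $w_f$. Replacing a sufficiently small neighborhood of $x$ on $Q_1$ with the straight chord strictly decreases Euclidean length (by the triangle inequality applied to the bend) and therefore strictly decreases the weighted length, since the weight is constant on this neighborhood. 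This contradicts $Q_1$ being a shortest path.

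The main obstacle I anticipate is handling the case in which the two paths are tangent at $x$ rather than transversally crossing, i.e.\ they share the line through $x$ locally but diverge on opposite sides. In that case one must argue that if $P_1^b$ and $P_2^b$ leave $x$ in the same direction, the paths coincide in a neighborhood (so $x$ is not a genuine crossing and can be pushed out of the interior of $f$); and if $P_1^b$ leaves $x$ along the reverse of $P_2^a$, then $Q_1$ has a $180^\circ$ turn and backtracks, allowing a strict shortcut. In every subcase, the two possibilities are either a contradiction with optimality (via the chord shortcut) or a degeneracy that can be removed by a slight perturbation, matching the ``either/or'' formulation analogous to Proposition~\ref{prop:betwcrit}.
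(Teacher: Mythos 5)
The paper does not actually prove this proposition---it is imported verbatim as Lemma~1 of Sun and Reif \cite{journals/jal/SunR06}---so there is no in-paper proof to compare against; your prefix-exchange argument is the standard one (and essentially Sun and Reif's own), and it is correct. The only points worth flagging are that the exchange step requires $P_1$ and $P_2$ to be \emph{globally} shortest to their respective endpoints (which is how ``geodesic shortest path'' must be read here, since local optimality alone would not let you swap prefixes), and that your tangency discussion does correctly exhaust the non-transversal cases, because two locally straight geodesics through an interior point of a face either cross transversally, locally coincide, or meet head-on.
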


\section{Algorithm outline}
\label{sect:algooutline}

In our algorithm, we progress a discretized-Dijkstra wavefront, as the traditional approach of
progressing a continuous-Dijkstra wavefront results in a complicated algorithm, 
This wavefront is defined and expanded using rays.
Every {\it ray} is a geodesic path in $\calP$ from its point of origin to a point on the wavefront.
The wavefront that defines the locus of points at weighted Euclidean distance $d$ from the source is identified by points on the the rays that are at weighted Euclidean distance $d$ from the source.
Note that geodesic paths in ${\cal P}$ are piece-wise linear.
We {\it initiate} a set $\calR(s)$ of rays whose origin is $s$.
The rays in $\calR(s)$ are ordered according to their counterclockwise angle with the positive $x$-axis and are uniformly distributed around $s$.
These rays together describe  a {\it (discrete) wavefront} initiated at $s$.
Let $r'$ and $r''$ be two rays in $\calR(s)$. Two rays are termed {\it successive} when
they are adjacent in the ordering of rays around any point from which those rays are initiated.
The number of rays in $\calR(s)$ is defined by the angle $\delta$ between successive rays; by expressing the value of $\delta$ in terms of $\epsilon$ later (in Section~\ref{sect:boundrays}), our algorithm ensures an $(1+\epsilon)$-approximation.
When the wavefront strikes any vertex $v$, an ordered set $\calR(v)$ of rays are initiated from $v$, unless vertex $v$ is the destination $t$ itself.

\subsection{Types of rays}

We next discuss the types of rays that form part of our wavefront.
Only a subset of the  rays that have been initiated are  used in the propagation of the wavefront; a ray that is considered for propagation is said to have been {\it traced}.
We propagate the rays as described below:
let $e=(v',v'')$ be a common edge between faces $f'$ and $f''$.
Consider a ray $r \in \calR(v)$ that is traced along face $f'$ and suppose it strikes $e$ at a point $y' \in e$.

If the angle of incidence of $r$ onto $e$ is less than $\theta_c(f', f'')$, then $r$ refracts onto face $f''$ with the angle of refraction defined by Snell's law of refraction.
The case in which the angle of incidence of $r$ onto $e$ is greater than $\theta_c(f', f'')$ occurs only when $w_{f'} > w_{f''}$.
In this case, we do not propagate $r$ further as $r$ would not be part of a weighted shortest path to $t$.

	\begin{wrapfigure}{r}{0.5\textwidth}
	\centering
	\begin{minipage}[t]{\linewidth}
	\centering
	\includegraphics[totalheight=0.8in]{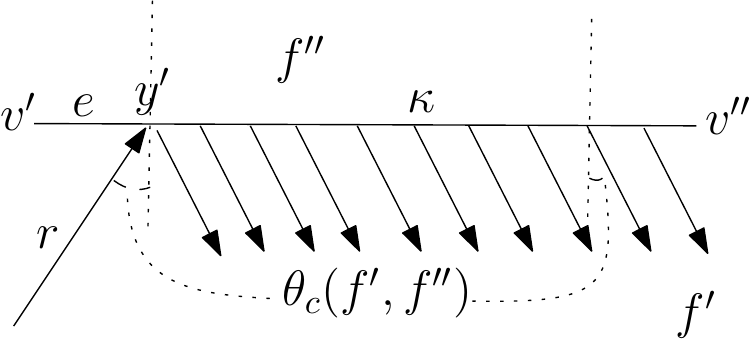}
	\vspace{-0.15in}
	\caption{\footnotesize Illustrating critical reflection}
	\label{fig:critreflequispaced}
	\end{minipage}
	\end{wrapfigure}

If the angle of incidence of $r$ onto $e$ is equal to $\theta_c(f', f'')$, then a weighted shortest path that uses the ray $r$ propagates along $e$ for a positive distance before critically reflecting back into face $f'$ itself. 
Let $\kappa = y'v''$ be the critical segment corresponding to this critical incidence of $r$ onto $e$ where $y'$ is the point of incidence of $r$ on $e$.
Since a weighted shortest path can be reflected back from any point on the critical segment $\kappa$ of $e$, 
our algorithm initiates rays from a discrete set of evenly spaced points on $\kappa$.
The number and position of points from which these rays are generated is again a function of $\epsilon$.
Let $\overrightarrow{v}$ be a vector normal to edge $e$, passing through point $y'$ to some point in face $f''$.
These rays critically reflect from $e$ back into face $f'$ while making an angle $-\theta_c(f', f'')$ with $\overrightarrow{v}$. 
(See Fig. \ref{fig:critreflequispaced}.)
We let $\calR(\kappa)$ be the ordered set of rays that originate from $\kappa$, and are ordered by the distance from $y'$. 
Two rays $r'$ and $r''$ in $\mathcal{R}(\kappa)$ are {\it successive} whenever there is no ray in $\mathcal{R}(\kappa)$ that occurs between $r'$ and $r''$ in the linear ordering along $\kappa$.

	\begin{wrapfigure}{r}{0.5\textwidth}
	\centering
	\begin{minipage}[t]{\linewidth}
	\centering
	\includegraphics[totalheight=1.3in]{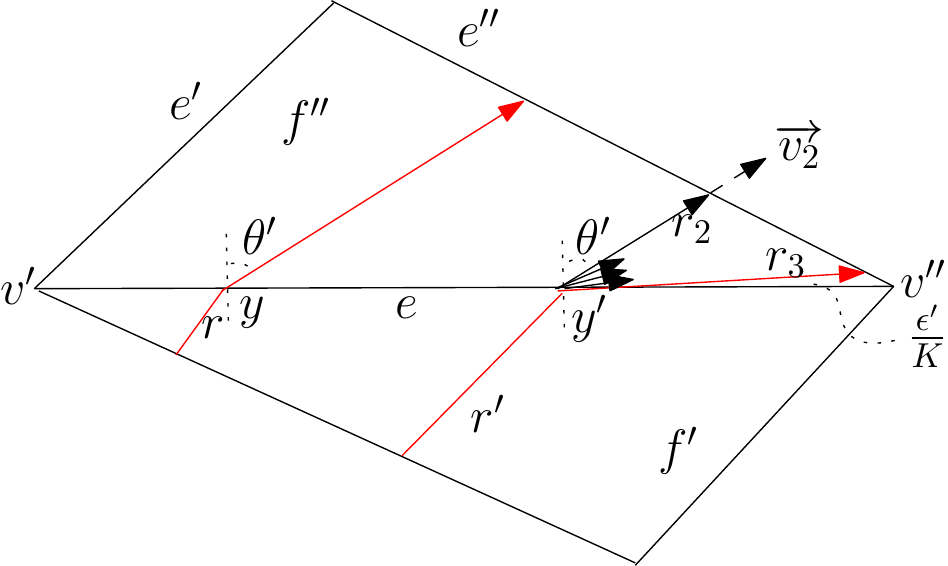}
	\caption{\footnotesize Illustrating rays initiated from a critical source $y'$ and the corresponding sibling pair $r, r_3$}
	\label{fig:initcritsrc}
	\end{minipage}
	\end{wrapfigure}

We next introduce an additional category of rays. To account for the divergence of rays and to ensure that
the wavefront is adequately represented, we further initiate an angle ordered set of {\it Steiner rays}, denoted by $\calR(y')$, from each critical point of entry $y'$ onto face $f''$. 
(See Fig. \ref{fig:initcritsrc}.)
Hence, a critical point of entry is also termed as a {\it critical source}.
These rays are motivated as follows: a pair of successive rays when traced along an edge sequence can diverge non-uniformly, with the divergence being large especially at angles close to critical angles.
To establish an approximation bound, instead of having a set of rays that is more dense from every source, we fill the gaps by generating Steiner rays from critical points of entries.
This helps in reducing the overall time complexity.
Let $v'$ and $v''$ be the endpoints of $e$.
For any source $u$, let $r, r'$ be successive rays in $\calR(u)$ such that $r$ is refracted and $r'$ is critically reflected.
Let $y$ (resp. $y'$) be the point at which $r$ (resp. $r'$) is incident to $e=(v', v'')$.
Let $\theta'$ be the angle of refraction of $r$.
Also, let $\overrightarrow{v_2}$ be a vector with origin at $y'$ and that makes an angle $\theta'$ with $\overrightarrow{v}$. 
As a result of discretization, there may not exist rays in $\calR(u)$ that intersect the cone $C(y',\overrightarrow{y'v''}, \overrightarrow{v_2})$.
Hence, to account for this region devoid of rays, an ordered set $\calR(y')$ of rays are initiated from $y'$ that lie in the cone $C(y',\overrightarrow{y'v''}, \overrightarrow{v_2})$, ordered by the angle each ray makes with respect to $\overrightarrow{v}$. 
Every ray in $\calR(y')$ is termed a {\it Steiner ray}.

To recapitulate, there are three sets  of rays that are used to define the discrete wavefront:
\begin{itemize}
\item \{$\calR(v)$ \hspace{0.005in} $|$ \hspace{0.005in} $v \in \calP$\}
\item \{$\calR(\kappa)$ \hspace{0.005in} $|$ \hspace{0.005in} $\kappa $ is a critical segment obtained  when a ray  $r \in \calR(v)$ strikes an edge $e$ at a critical angle\} 
\item \{$\calR(y')$ \hspace{0.005in} $|$ \hspace{0.005in} $y'$ is a critical point of entry for a ray $r \in \calR(v)$ for some $v \in \calP$\}
\end{itemize}

\subsection{Ray bundles}

We approximate the expansion of the continuous-Dijkstra wavefront by tracing rays in pairs (angle between the rays being less than $\pi$) so that each pair represents a section of the continuous-Dijkstra wavefront
that lies between the pair at corresponding distance $d$ along the chosen pair of rays.
Appropriately chosen  pairs suffice to represent the behavior of this section.
The rays could be initiated from either of these: (i) a vertex $v$, (ii) a critical source $c$, or (iii) a critical segment $\kappa$.
Let $o$ be one such source of rays.
We partition the set of rays with origin $o$ as follows: 
Let $B$ be a maximal set of successive rays in $\calR(o)$ such that all rays in $B$ cross the same edge sequence $\calE$ when traced from the source to  the current state of the discrete wavefront.
Then $B$ is said to be a {\it ray bundle} of $\calR(o)$.
Furthermore, $\calE$ is the edge sequence associated with $B$.
Let $r'$ and $r''$ be two rays in $B$ intersecting an edge $e \in {\cal E}$ such that the line segment defined by the points of intersection of $r'$ and $r''$ with $e$ intersects every other ray from $B$ (when traced).
Then the rays $r'$ and $r''$ are extremal rays of the bundle and are termed as the {\it sibling pair} of ray bundle $B$.
For any ray bundle $B$, instead of tracing all the rays in $B$, we trace only the sibling pair of $B$.
This helps in reducing the number of event points, hence the time complexity.

\begin{wrapfigure}{r}{0.5\textwidth}
\begin{minipage}[t]{\linewidth}
\vspace{-20pt}
\hspace{-10pt}
\begin{center}
\includegraphics[totalheight=1.0in]{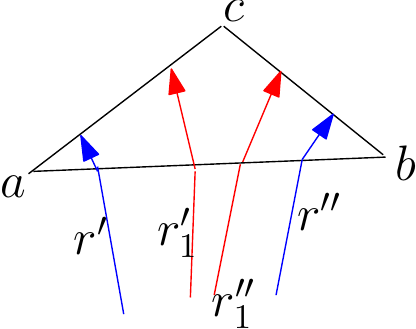}
\end{center}
\vspace{-20pt}
\caption{\footnotesize Split of a ray bundle }
\label{fig:bundsplit}
\vspace{-10pt}
\end{minipage}
\end{wrapfigure}

The ray bundles and corresponding sibling pairs are updated as the rays progress further.
For the sibling pair $r', r''$ of ray bundle $B$, when the sequence of edges intersected by $r'$ changes from the sequence of edges intersected by $r''$ for the first time, we need to split the bundle of rays into two and determine new sibling pairs.  
A binary search among the rays in $B$ is used to trace the appropriate rays across the edge sequence of $B$ and form new sibling pairs. 
Let $abc$ be a face $f$ in $\calP$.
(See Fig. \ref{fig:bundsplit}.)
Suppose rays $r'$ and $r''$ are siblings in a bundle $B$ before they strike edge $ab$ of $f$. 
However, when they are traced further, suppose $r'$ strikes edge $ac$ and $r''$ strikes edge $bc$.
At this instance of wavefront progression, the sequence of edges that are intersected by  $r'$ and $r''$ differ for the first time, and hence $r'$ and $r''$ do not belong to the same bundle from there on.
Using binary search over the rays in $B$, a pair of successive rays, say $r_1'$ and $r_1''$, are found to determine new sibling pairs: $r', r_1'$ and $r'', r_1''$.
Further, the ray bundle $B$ is {\it split} accordingly.
And, the wavefront is said to {\em strike} $c$. 
The details of the algorithm to find rays $r_1'$ and $r_1''$ is described later.

The sets of rays in $\calR(y')$ and $\calR(\kappa)$, where $y'$ is a critical point of entry and $\kappa$ is a critical segment, are handled similarly.
The rays in $\calR(\kappa)$ have the characteristic that the paths of any two rays in $\calR(\kappa)$ are parallel between any two successive edges in the edge sequence associated with the corresponding ray bundle.

\subsection{Tree of rays}

Let $v$ be a vertex in $\calP$. And let $\calR(v)$ be the set of rays initiated from $v$. 
A ray may lead to the initiation of critical sources as described before.
Let $S_1$ be the set of critical sources such that for any source $q$ in $S_1$, Steiner rays are initiated from $q$ when the discrete wavefront that  originates at $v$ strikes $q$.
Further, let $S_2, \ldots, S_{i-1}, S_i, \ldots, S_k$ be the sets of critical sources such that for every $q' \in S_i$ and $2 \le i \le k$, critical source $q'$ is initiated due to the critical incidence of a Steiner ray from a source in $S_{i-1}$.
We organize the sources in the set ${\cal S} (v) =\{v\} \cup (\cup_j S_j)$ into a {\it tree of rays}, denoted by $\calT_R(v)$.
Each node of $\calT_R(v)$ corresponds to a source in ${\cal S}(v)$:
more specifically, $v$ is the root node and every other node in $\calT_R(v)$ is a distinct critical source from $\cup_j S_j$. 
For any two nodes $u \in S_{i-1}, w \in S_i$ in $\calT_R(v)$, $u$ is the parent of $w$ in $\calT_R(v)$ if and only if the critical source $w $, located on an edge $e$, is initiated when a ray from $u$ strikes $e$ at the critical angle for edge $e$ (and the point of critical incidence is $w$). 
The in-order traversal of the tree $\calT_R(v)$ provides a natural order on the set of rays. 
A subset $S$ of rays are termed {\it successive} whenever rays in $S$ are a contiguous subsequence of the ordered set of rays

\subsection{Ray Bundles and Sibling Pairs}
In order to partition the rays in
$\calR(v)$, where $v$ is  a vertex, or in $\calR(\kappa)$, where $\kappa$ is a critical segment,
we generalize the definitions of  sibling pair and ray bundles.

We consider the rays in $\calR(v)$. 
Let ${\calE} = \{e_1, e_2, \ldots, e_k\}$ be an edge sequence.
For any $j \in [1, k]$, we say $e_j, \ldots, e_k$ is a {\it suffix of edge sequence $\calE$}. 
Two rays belonging to $\calT_{R}(v)$ are {\it siblings} whenever the edge sequence associated with one of them is a suffix of the edge sequence of the other. 
A maximal set $S$ of successive rays initiated from the nodes of $\calT_{R}(v)$ that are siblings to each other is a {\it ray bundle}.
For two rays $r', r''$ belonging to a ray bundle $B$ and points $p' = r' \cap e_k, p'' = r'' \cap e_k$, the rays $r'$ and $r''$ are termed as the {\it sibling pair of $B$} whenever the line segment $p'p''$ intersects every ray in $B$.
A similar definition holds for rays in $\calR(\kappa)$.
 
Therefore, there are two  kinds of sibling pairs possible:
Either (i) both the rays in a sibling pair originate from the same source or the
 sources of both the rays in a sibling pair belong to a tree of rays, ${\cal T}_R(v)$ or
(iii) both the rays in a sibling pair originate from the same critical segment.
A ray bundle of the first category is denoted as {\it a ray bundle of $\calT_{R}(v)$}. 

Let $r', r''$ be a sibling pair of  rays in a bundle $B$ in $\calT_R(v)$.
Let $u, w \in \calT_R(v)$ be the respective origins of $r'$ and $r''$. 
Also, let $v'$ be the least common ancestor of $u$ and $w$ in $\calT_{R}(v)$.
We define $v'$ to be the {\em root} of $B$.
The path $P$ in tree $\calT_{R}(v)$ from $v'$ to $u$ comprising of a  sequence of critical sources is the {\it critical ancestor path} of $r'$ with respect to $r''$.  
Note that the critical ancestor path of a ray is always defined with respect to a sibling pair.
Given that rays in the bundle $B$ are successive rays,
nodes on the critical ancestor paths from $u$ and $w$ to $v'$
in the subtree $\calT_R(v)$ provide all the rays in the bundle $B$.

We next show a property of the rays in a bundle. We define two rays to be {\em pairwise divergent} during traversal of an edge sequence
${\cal E}$ if they do not intersect each other during the traversal. A bundle of rays is {\em divergent} during traversal of ${\cal E}$
if all rays in the bundle are pairwise divergent during the traversal.

\begin{lemma}
Let $B$ be a ray bundle of $\calT_{R}(v)$ and let ${\cal E}$ be the edge sequence associated with rays in $B$.
The rays in $B$ are pairwise divergent whey they traverse across the edge sequence ${\cal E}$.
\end{lemma}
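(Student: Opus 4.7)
The plan is to combine the non-crossing property of geodesic paths from a common source (Proposition~\ref{prop:noncrossing}) with a monotonicity property of Snell's law of refraction, propagated through the tree-of-rays structure that organizes the origins of the rays in $B$. First I would dispose of the single-source case: if the two rays in a sibling pair emanate from the same node of $\calT_R(v)$, they are geodesic paths from a common point, and Proposition~\ref{prop:noncrossing} already forbids them from meeting in the interior of any face along $\mathcal{E}$, so pairwise divergence is immediate on every face traversed.

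For a sibling pair $r', r''$ with distinct origins $u$ and $w$, I would induct on the combined depth of $u$ and $w$ below the root $v'$ of $B$. The inductive structure is driven by the critical ancestor paths: every non-root critical source $y'$ in $\calT_R(v)$ emits its Steiner rays strictly inside the cone $C(y',\overrightarrow{y'v''},\overrightarrow{v_2})$, bounded on one side by the critical-reflection direction and on the other by the Snell-refracted continuation of the parent ray that created $y'$. A local angular check at each such $y'$ shows that the Steiner rays leaving $y'$ cannot cross the rays already present in the ambient bundle, so the ordering inherited from the tree persists as the bundle first reaches a common edge of $\mathcal{E}$.

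To carry the ordering across the remaining faces of $\mathcal{E}$, I would invoke a monotonicity property of Snell's refraction: if two rays enter a face $f$ through ordered points $p_1, p_2$ on an incoming edge with respective angles of incidence $\theta_1, \theta_2$, then the refracted continuations given by $w_f \sin\theta_i = w_{f''}\sin\theta_i''$ exit the outgoing edge in the same order and cannot cross inside $f$ (if they did, the ordering on the outgoing edge would be reversed, contradicting the direct geometric comparison at the two refraction points). Iterating this edge by edge along $\mathcal{E}$ yields pairwise divergence throughout the traversal.

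The main obstacle I anticipate is the interface step at each critical source, namely verifying that the Steiner rays emitted at $y'$ genuinely extend the ordering of the bundle without introducing a crossing, especially when the critical segment $\kappa$ on the parent's edge is contributing rays from $\calR(\kappa)$ in parallel. This reduces to a careful angular analysis at $y'$, controlled by the definition of the Steiner cone and the relation $\sin\theta_c = w_{f''}/w_{f'}$; once this local step is in place, the rest of the proof is essentially an iterated application of the monotonicity of refraction, with Proposition~\ref{prop:noncrossing} handling only the base case of shared origins.
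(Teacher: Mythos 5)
Your proposal is correct and follows essentially the same route as the paper's proof: treat rays from a common origin first, handle each critical source by observing that its Steiner rays are confined to the cone between the critical direction and the parent's refracted ray (so they extend the ordering without crossing), and then propagate divergence across $\mathcal{E}$ via the fact that Snell refraction preserves the angular ordering of non-crossing rays. You actually spell out the refraction-monotonicity step and the appeal to Proposition~\ref{prop:noncrossing} more explicitly than the paper does (which simply asserts that divergence is preserved on refraction and that the Steiner rays are divergent ``by construction''); just note that your monotonicity claim needs the inductive hypothesis that the two rays are already non-converging at the entry edge, not merely that their entry points and incidence angles are ordered.
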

\begin{proof}
Let $v'$ be the root of a bundle. The rays in $B$ that start from $v'$, termed $R(v')$, clearly diverge during the  traversal of  the sequence of edges ${\cal E}$, since divergence is preserved on refraction. Additionally, consider
rays in $B$ that are generated from a critical source, $w$, termed $R(w)$. Each pair of rays, a steiner ray $r$ originating at $w$ and a ray in $R(v')$, are pairwise divergent, by construction.
Similarly, every pair of ray in $R(w)$ is pairwise divergent.
These rays remain pairwise divergent  after refraction during propagation across edges in the sequence ${\cal E}$.
In general, let $R$ be  a bundle of
rays that is divergent during traversal of a subsequence of ${\cal E}$. These rays are  generated from a vertex and/or  a set of critical sources. Suppose the
rays in the bundle  strike an edge $e_j \in {\cal E}$ and refract, and also generate
a critical source $w' $ on $e_j$. Then $R \cup R(w')$, where $R(w')$ is the set of Steiner rays generated from $w'$,
is a set of rays with every pair of rays being pairwise divergent during traversal of ${\cal E}$ since  every  steiner ray in
$R(w')$ is pairwise divergent from each ray in $R$.
Thus these rays will not intersect each other as they strike edges $e_{j+1} \ldots e_k$.
\end{proof}

The consequence of this Lemma is that  rays in a bundle $B$ can be angle ordered. 
As described later, this ordering and the critical ancestor path of a ray will be useful in efficiently splitting a sibling pair of $\calT_{R}(v)$.


\section{Bounding the number of initiated and traced rays}
\label{sect:boundrays}

Before detailing the algorithm, we bound the number of initiated and traced rays in $\mathcal{P}$.
An approximate weighted shortest path from source $s$ to $t$ can be split into sub-paths such that each such sub-path goes between a pair of vertices. 
Further, each of these sub-paths are classified into following types:

\begin{itemize}
\item
a {\it Type-I path} from one vertex to another vertex that does not use any critical segment in between, and

\item
a {\it Type-II path} from one vertex to another vertex that critically reflects and uses at least one critical segment in-between.
\end{itemize}

We next bound the approximation achieved for both of these types of paths by establishing properties regarding the rays.
We show that given a Type-I (resp. Type-II) weighted shortest path $P$ from a vertex, say $u$, to another vertex, say $v$, there exists a Type-I (resp. Type-II) path $P'$ using only rays in our discretization such that $P'$ closely approximates $P$.
We first consider the divergence of successive rays that are within the same bundle. 
This divergence is due to refraction of rays. 

\begin{lemma}
\label{lem:depsilon}
Let $r', r''$ be two successive rays in a bundle in $\calT_R(u)$.
Also, let $q'$ and $q''$ be points on rays $r'$ and $r''$ respectively, at weighted Euclidean distance $d$ from $s$ and lying on the same face.
If the angle between rays $r'$ and $r''$ is upper bounded by $\frac{1}{2\mu}(\epsilon')^{n^2}\epsilon'$, then the weighted Euclidean distance of the line segment joining $q'$ and $q''$ is upper bounded by $d \epsilon'$. 
Here, $\epsilon'$ is a small constant (expressed in terms of input parameters).
\end{lemma}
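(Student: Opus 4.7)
The plan is to track the angular divergence between $r'$ and $r''$ as they refract together through their common edge sequence $\calE$, and then convert this angular bound into the required bound on the weighted Euclidean distance $w_f \|q'q''\|$.

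First, I would analyze how Snell's law transforms the angular gap at a single refraction. Suppose the two rays strike an edge $e$ shared by faces $f_i, f_{i+1}$ with angles of incidence $\theta_1, \theta_2$ and angular gap $\Delta = |\theta_2-\theta_1|$, and emerge with refracted angles $\theta_1'', \theta_2''$. Since $w_{f_i}\sin\theta_j = w_{f_{i+1}}\sin\theta_j''$ for $j = 1,2$, a mean-value argument applied to $\theta \mapsto \arcsin\!\bigl(\tfrac{w_{f_i}}{w_{f_{i+1}}}\sin\theta\bigr)$ gives
\[
|\theta_2''-\theta_1''| \ \le\ \frac{w_{f_i}}{w_{f_{i+1}}} \cdot \frac{\cos\theta}{\cos\theta''}\cdot \Delta
\]
for intermediate values. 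Because $r'$ and $r''$ belong to a single ray bundle, they did not undergo critical reflection across $e$ (that would have split the bundle and introduced a critical source); this keeps the angle of incidence bounded away from $\theta_c(f_i,f_{i+1})$, so the factor $\cos\theta/\cos\theta''$ and the weight ratio together yield a per-refraction amplification bounded by some $A$ that is at most $1/\epsilon'$ up to constants involving $\mu$.

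Next, I would compose this bound across the full edge sequence. By Proposition~\ref{prop:edgeseqlen}, the edge sequence associated with the bundle has length $O(n^2)$, so the cumulative amplification of the angular gap is at most $A^{O(n^2)} \le (1/\epsilon')^{n^2}$. Combined with the hypothesis that the initial angle between $r'$ and $r''$ is at most $\frac{1}{2\mu}(\epsilon')^{n^2}\epsilon'$, the angle $\alpha$ between $r'$ and $r''$ on the face $f$ containing $q'$ and $q''$ is at most $\tfrac{\epsilon'}{2\mu}$.

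Finally, I would convert this angular bound to a distance bound on $f$. Treating the two rays as straight segments on $f$ that, after possibly translating one of them to account for the (short) critical ancestor paths connecting their origins in $\calT_R(u)$, emanate from a common virtual apex with separation angle at most $\alpha$, and noting that $q',q''$ lie at weighted distance $d$ from $s$ with face weight $w_f \le w_{\max}$, we obtain $\|q'q''\| \le \frac{d}{w_f}\cdot\alpha$. Multiplying by $w_f$ yields $w_f\|q'q''\| \le d\alpha \cdot 2\mu = d\epsilon'$, absorbing the factor $2\mu$ into the slack built into the hypothesis. The main obstacle is the uniform bound on the per-refraction amplification $\cos\theta/\cos\theta''$: one must argue that inside a bundle the angle of incidence stays bounded away from $\theta_c$ by an $\epsilon'$-sized margin, exploiting the fact that near-critical incidences would have generated Steiner rays from a critical point of entry and split the bundle before amplification could blow up.
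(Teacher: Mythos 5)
Your overall strategy matches the paper's: bound the per-refraction amplification of the angular gap via Snell's law (the paper derives $\delta_{i+1} = \frac{w_i}{w_{i+1}}\frac{\cos\theta_i'}{\cos\theta_{i+1}}\delta_i$ by a first-order expansion, essentially your mean-value bound), compose over the $O(n^2)$-length edge sequence to get a cumulative factor of roughly $\mu\beta^{n^2}$ with $\beta \le 1/\epsilon'$, and then convert the angular bound into a distance bound. The first two steps are fine, but there are two genuine gaps in the remainder.

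First, the conversion from angle to distance. Your ``common virtual apex'' picture, giving $\Vert q'q'' \Vert \le \frac{d}{w_f}\alpha$, is not justified: after even one refraction the two rays no longer emanate from a common point, and the separation between corresponding points at weighted distance $d$ is not simply $d\alpha/w_f$. The separation has two sources that compound across the edge sequence: the lateral offset already present where the rays cross each edge (which gets re-weighted by $\frac{w_{i+1}}{w_i}$ and stretched by oblique incidence on entering the next face), plus the new arc-length divergence $\delta_{i+1}$ times the distance travelled inside $f_{i+1}$. The paper handles this with an explicit induction: setting $\kappa_j = \frac{w_1}{w_j}\beta^{j-1}\delta_1$, it shows that if the weighted separation on face $f_i$ is at most $2d'\kappa_i$ then on face $f_{i+1}$ it is at most $2d\kappa_{i+1}$, using the triangle inequality across the shared edge for the inherited offset and the small-angle arc bound for the new divergence. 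Your argument needs this induction (or an equivalent accounting of the accumulated lateral offset); translating one ray to a virtual apex does not supply it.

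Second, the uniform bound on $\cos\theta_i'/\cos\theta_{i+1}$. You propose that within a bundle the incidence angle stays an $\epsilon'$-sized margin away from $\theta_c$ because a near-critical incidence ``would have generated Steiner rays and split the bundle.'' That does not hold: a ray can be incident at any angle strictly below $\theta_c$ without any ray of the bundle being critically incident, so no critical source is created and no split occurs; the bundle mechanism yields no quantitative margin. The paper does not prove such a margin either --- it \emph{imposes} the restriction $\theta_i' \le \theta_c^i - K\epsilon'$ (with $K \le \mu^2$) as an assumption, which forces every refracted angle below $\frac{\pi}{2} - \epsilon'$ and hence $\beta \le \frac{1}{\epsilon'}$, and explicitly defers bounding the error introduced by this truncation. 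So this step cannot be discharged by the bundle-splitting mechanism; it must be treated as a separate truncation with its own error analysis.
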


\begin{wrapfigure}{r}{0.5\textwidth}
\centering
\begin{minipage}[b]{.4\textwidth}
\centering{\includegraphics[totalheight=1in]{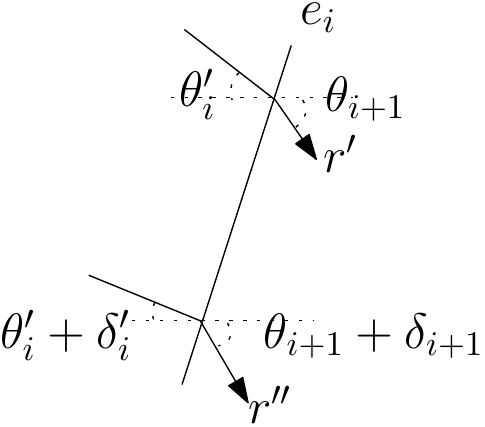}}
\caption{\footnotesize Illustrating the angle of refractions of rays $r'$ and $r''$}
\label{fig:refrangles}
\end{minipage}
\end{wrapfigure}

\begin{proof}
We first show a  bound on the divergence between a pair of successive rays  in $\calT_R(u)$.
Let the sibling pair of rays $r', r''$ in $\calT_R(u)$ traverse across the (same sequence of) faces $f_1, f_2 \ldots f_{p+1}$ with weights $w_1, w_2, \ldots, w_{p+1}$ respectively.
Let $\delta$ be the angle between $r'$ and $r''$.
Let $e_1, e_2, \ldots e_{p}$ be the edge sequence associated with this sibling pair, with $e_i$ being the edge
adjacent to faces $f_i$ and $f_{i+1}$ for every $i \leq p$.
Let $\theta_1', \theta_2', \ldots, \theta_{p}'$ be the angles at which the ray $r'$ is incident on edges $e_1, e_2, \ldots, e_{p}$ respectively.
(See Fig. \ref{fig:refrangles}.)
Let $\theta_2, \theta_3, \ldots, \theta_{p+1}$ be the angles at which the ray $r'$ refracts at edges $e_1, e_2, \ldots, e_{p}$ respectively. 
Similarly, let $\theta_1'+\delta_1', \theta_2'+\delta_2', \ldots, \theta_{p}'+\delta_{p}'$ be the angles at which the ray $r''$ is incident on edges $e_1, e_2, \ldots, e_{p}$ respectively.
And, let $\theta_2+\delta_2, \theta_3+\delta_3, \ldots, \theta_{p+1}+\delta_{p+1}$ be the angles at which the ray $r''$ refracts at edges $e_1, e_2, \ldots, e_{p}$ respectively. 
(See Fig. \ref{fig:refrthm}.)
For every $i$, if a critical angle exists for edge $e_i$, assume that both $\theta_i'$ and $\theta_i' + \delta_i'$ are less than that critical angle.
Further, for every $i$, assume that $\delta_i$ is a small positive number less than $\epsilon'$
(we will justify this assumption later). We will first provide a bound on $\delta_p$ in terms of $\delta_1$.  \hfil\break
\hfil\break
For every integer $i \in [1, p]$, using Snell's law,
\begin{eqnarray}
w_i\sin{\theta_i'}=w_{i+1}\sin{\theta_{i+1}} \label{eq:refr}
\end{eqnarray} 
and 
\begin{eqnarray}
{w_i\sin{(\theta_i'+\delta_i')}=w_{i+1}\sin{(\theta_{i+1}+\delta_{i+1})}}.  \nonumber
\end{eqnarray} 
Since $\delta'_i=\delta_i$, 
\begin{eqnarray}
{w_i\sin{(\theta_i'+\delta_i)}=w_{i+1}\sin{(\theta_{i+1}+\delta_{i+1})}}.  \nonumber
\end{eqnarray} 
Both $\delta_i$ and $\delta_{i+1}$ are assumed to be very small; thus we approximate $\sin{\delta_i}, \cos{\delta_i}, \sin{\delta_{i+1}}$, and $\cos{\delta_{i+1}}$ with the first term of the series expansion (an analysis with higher order  terms reveals no additional benefit) and thus:
\begin{eqnarray}
\lefteqn{w_i\sin{\theta_i'}+w_i\delta_i\cos{\theta_i'} = w_{i+1}\sin{\theta_{i+1}}+w_{i+1}\delta_{i+1}\cos{\theta_{i+1}}} \nonumber \\
& \Rightarrow & w_i\delta_i\cos{\theta_i'}=w_{i+1}\delta_{i+1}\cos{\theta_{i+1}}  \hspace{0.08in} (from (\ref{eq:refr})) \nonumber \\
& \Rightarrow & \delta_{i+1}=\frac{w_i}{w_{i+1}}\frac{\cos{\theta_i'}}{\cos{\theta_{i+1}}}\delta_i  \label{eq:delta}
\end{eqnarray}
Letting $\max_{1\leq i \leq p} \frac{\cos{\theta_i'}}{\cos{\theta_{i+1}}}
 = \beta$, the above leads to  
\begin{eqnarray}
\delta_{p+1} \le \frac{w_1}{w_{p+1}}\beta^p\delta_1  ,  \ \  p \geq 1\label{eq:deltan}
\end{eqnarray}

Let $\theta^i_c$ be the critical angle corresponding to faces $f_i$ and $f_{i+1}$.
If $\theta'_i $ is close to the critical angle $\theta^c_i$ then $\beta$ grows unbounded. 
We thus restrict $\theta'_i$ to $\theta^i_c - K\epsilon'$, where $K$ is a constant defined as $\max_i(\frac{w_{i+1}}{w_i \sin \theta^i_c})$. 
And, we note that $K$ is upper bounded by $\mu^2$.
This ensures that every refracted ray is at an angle less than $\frac{\pi}{2} - \epsilon'$ with respect to the normal at edge $e_i$.
Thus $\beta \leq \frac{1}{\cos(\frac{\pi}{2} - \epsilon')} \leq \frac{1}{\epsilon'}$.
The error introduced due to this assumption will be bounded later.

\begin{figure}[h]
\begin{minipage}[t]{\linewidth}
\begin{center}
\includegraphics[totalheight=1.6in]{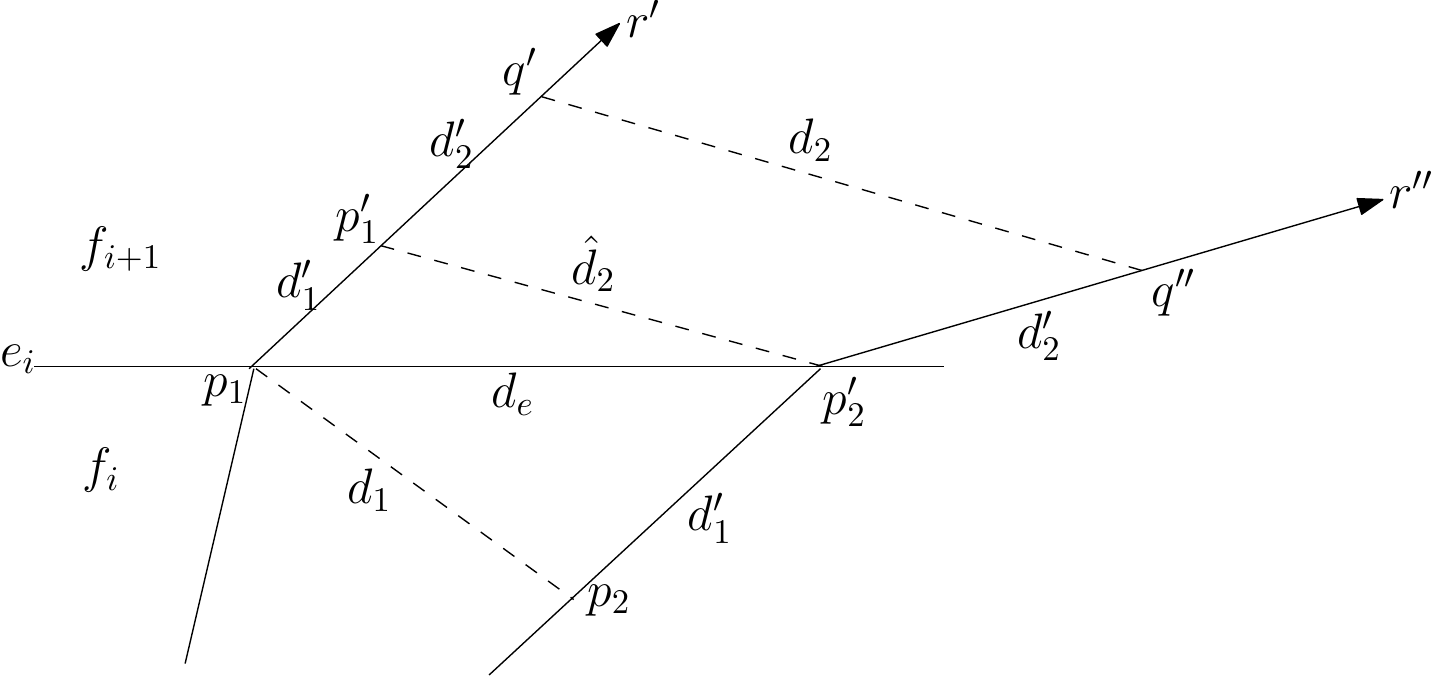}
\caption{\footnotesize
Illustrating the construction in proving Lemma \ref{lem:depsilon}
\scriptsize{}
(Except for $d_e$, all the distances' shown in the figure are weighted Euclidean distances.)
\normalsize{}
}
\label{fig:refrthm}
\end{center}
\end{minipage}
\vspace*{-0.1in}
\end{figure}

To compute the weighted Euclidean distance between two points at weighted Eulcidean distance $d$ from the source, first we restrict our attention to two points that are on successive rays in the same bundle at weighted Eulcidean distance $d$ from the source and that lie on the same face. 
Let the two points, $q'$ and $q''$, both on the same face $f_{i+1}$, be such that they
(i) lie on a pair of successive rays $r'$ and $r''$ in the same bundle in $\calT_R(u)$, and (ii) are at equal weighted Euclidean distance from  $u$.
Let $p_1$ be the point of incidence of $r'$ on an edge $e_i$ (which is common to $f_i$ and $f_{i+1}$) with an angle of incidence $\theta_i'$ and let it refract from $p_1$ at an angle of refraction $\theta_{i+1}$. 
Further, let the ray $r''$ be incident on $e_i$ at $p_2'$ with an angle of incidence $\theta_i'+\delta_i'$ and refract from $p_2'$ at an angle of refraction $\theta_{i+1}+\delta_{i+1}$. 
(See Fig. \ref{fig:refrthm}.)
W.l.o.g., assume that the weighted Euclidean distance from vertex $u$ to $p_2'$ is larger than the distance from $u$ to $p_1$. 
Consider an additional point $p_2$ located on ray $r''$ such that both $p_1$ and $p_2$ are in the face $f_i$, and the weighted Euclidean distance from $u$ to either of these points is $d'$. 
Let $d_1$ be the weighted Euclidean distance between $p_1$ and $p_2$.
Also, points $q', q''$ are located on rays $r', r''$ respectively such that $q', q''$ are in the region $f_{i+1}$ with weight $w_{i+1}$.
Further, we let $d$ be the weighted Euclidean distance from $u$ to either of these points and let $d_2$ be the weighted Euclidean distance between $q'$ and $q''$.

We establish a bound on $d_2$ for $i \geq 0$.
For every $j$, we let $\kappa_j =  \frac{w_1}{w_j} \beta^{j-1} \delta_1$.
We first show that if $d_1 \le 2d' \kappa_i$ then $ d_2 \le 2d \kappa_{i+1}$. 
Clearly, $d_1 \leq 2d' \kappa_1$, i.e., when $i=1$.

We first consider the case when points $q'$ and $q''$ are on the same face $f_{i+1}$.
Let $p_1'$ be the point that is at weighted Euclidean distance $d_1'$ from $p_1$.
Let $d_e$ be the Euclidean distance between $p_1$ and $p_2'$ along edge $e$.
Also, let $\hat{d}_2$ be the weighted Euclidean distance between $p_1'$ and $p_2'$.

Now by triangle inequality
\[  d_e \leq \frac{1}{w_i} ( d_1 +d_1') \]
By assumption $d_1 \leq 2d' \kappa_i$ and, by triangle inequality
\begin{eqnarray*}
 \hat{d}_2  &\leq & w_{i+1} (\frac{d_1'}{w_{i+1}} +  \frac{1}{w_i} ( d_1 +d_1')) \\
 &\leq&  d_1' + \frac{w_{i+1}}{w_i}(2d' \kappa_i + d_1') \\
 & \leq& 2(d'+d_1') \kappa_{i+1}
\end{eqnarray*}
The last inequality follows from the assumption that $\kappa_{i+1} \geq \kappa_i \geq 1$.

Next consider the weighted Eulcidean distance between $q'$ and $q''$, where $q'$ is at weighted Euclidean distance $d_2'$ from $p_1'$, and $q''$ is at weighted Euclidean distance $d_2'$ from $p_2'$.
\[ d_2 \leq \hat{d}_2 + \delta_{i+1}d_2' \]
using the small angle approximation of distance along a circular arc subtending a small angle.
Thus, since $\delta_{i+1} \leq \kappa_{i+1} $,
\[ d_2 \leq  2(d'+  d_1'+ d_2') \kappa_{i+1}  = 2d\kappa_{i+1}\]
Since $\kappa_j =  \frac{w_1}{w_j} \beta^{j-1} \delta_1$, $\beta \leq \frac{1}{\epsilon'}$ and $j \leq n^2$, if we choose $\delta_1 \leq \frac{(\epsilon')^{n^2}\epsilon'}{2\mu}$ then the weighted Euclidean distance $d_2$ is bounded by $ d\epsilon'$.
\end{proof}

The following corollary to the above eliminates the restriction for points $q'$ and $q''$ being on the same face.
\begin{cor}
\label{lem:depsilonC}
Let $r', r''$ be two successive rays in a bundle $B$ in $\calT_R(u)$.
Also, let $q'$ and $q''$ be two points on $r'$ and $r''$ respectively, at weighted Euclidean distance $d$ from $s$.
If the angle between rays $r'$ and $r''$ is upper bounded by $\frac{1}{2\mu}(\epsilon')^{n^2}\epsilon'$, then the weighted Euclidean distance of the line segment joining $q'$ and $q''$ is upper bounded by $d \epsilon'$. 
Here, $\epsilon'$ is a constant (expressed in terms of input parameters).
\end{cor}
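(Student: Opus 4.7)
My plan is to reduce the corollary to Lemma \ref{lem:depsilon} by treating only the residual case where $q'$ and $q''$ happen to lie on different faces. If $q'$ and $q''$ lie on a common face, Lemma \ref{lem:depsilon} applies verbatim and we are done. Otherwise, since $r'$ and $r''$ are successive rays in the same bundle of $\calT_R(u)$, they cross the same edge sequence $e_1, \ldots, e_k$; write $d_j'$ (resp.\ $d_j''$) for the weighted Euclidean distance from $u$ to the point at which $r'$ (resp.\ $r''$) crosses $e_j$. Two equi-weighted-distance points can lie on different faces only if $d$ falls strictly between $d_j'$ and $d_j''$ for some index $j$; assume WLOG $d_j' \le d \le d_j''$, so that $q' \in f_{j+1}$ and $q'' \in f_j$.

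Next I would introduce two auxiliary points that both sit on face $f_j$ at a common weighted distance from $u$, and apply the lemma to them. Let $p_1 = r' \cap e_j$ and let $\hat q$ be the point on $r''$ at weighted distance $d_j'$ from $u$; both lie in the closure of $f_j$ and are at equal weighted distance $d_j'$ from $u$, so Lemma \ref{lem:depsilon} gives $\|p_1\hat q\|_w \le d_j'\epsilon'$. Since $q'$ is on $r'$ at weighted distance $d - d_j'$ past $p_1$ within $f_{j+1}$ and $q''$ is on $r''$ at weighted distance $d - d_j'$ past $\hat q$ within $f_j$, the triangle inequality yields
\[ \|q'q''\|_w \;\le\; \|q'p_1\|_w + \|p_1\hat q\|_w + \|\hat q q''\|_w \;\le\; 2(d - d_j') + d_j'\epsilon'. \]

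The third step, which I expect to be the main obstacle, is to bound $d - d_j' \le d_j'' - d_j' = \|\hat q\,(r''\cap e_j)\|_w$. Since $p_1$, $\hat q$, and $p_2 := r''\cap e_j$ all lie in the closure of $f_j$ with $p_1, p_2 \in e_j$, elementary trigonometry in the triangle $p_1 \hat q p_2$ (together with the restriction $\theta_i' \le \theta_c^i - K\epsilon'$ enforced in the proof of Lemma \ref{lem:depsilon}, which keeps the refraction directions bounded away from parallelism with the edges and thereby controls the relevant sine terms) yields $\|\hat q p_2\|_w = O(\|p_1\hat q\|_w) = O(d_j'\epsilon')$.

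Plugging this bound back in gives $\|q'q''\|_w = O(d\epsilon')$, and re-absorbing the constant into the $\frac{1}{2\mu}(\epsilon')^{n^2}\epsilon'$ bound on the angle between $r'$ and $r''$ (the same maneuver already used at the end of Lemma \ref{lem:depsilon}) recovers exactly $\|q'q''\|_w \le d\epsilon'$. The delicate point throughout is the third step: one must verify that grazing configurations, in which $r''$ meets $e_j$ at an angle approaching $\pi/2$, are excluded or else contribute only to the constant that gets absorbed; this is precisely the role of the critical-angle avoidance invoked in Lemma \ref{lem:depsilon}, now reused on the far side of the edge.
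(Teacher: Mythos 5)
There is a genuine gap, concentrated exactly where you predicted: the third step. You need $\Vert \hat q\, p_2\Vert_w = O(\Vert p_1 \hat q\Vert_w)$, i.e.\ that the weighted length of $r''$ from the wavefront at distance $d_j'$ down to its crossing of $e_j$ is within a constant of the cross-ray separation. In the triangle $p_1\hat q p_2$ this constant is essentially $1/\cos\theta$, where $\theta$ is the angle of incidence of $r''$ onto $e_j$ — and that angle is \emph{not} bounded away from $\pi/2$ by a constant. A critical angle for $e_j$ exists only when the ray passes from a heavier face into a lighter one; in the opposite direction the incidence angle can be arbitrarily close to $\pi/2$, and even the margin enforced in Lemma~\ref{lem:depsilon} ($\theta_i' \le \theta_c^i - K\epsilon'$, giving refraction angles at most $\pi/2-\epsilon'$) only yields $1/\cos\theta \le 1/\epsilon'$, not $O(1)$. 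That extra $1/\epsilon'$ destroys the $d\epsilon'$ conclusion, and it cannot be "re-absorbed into the angle bound": the corollary's hypothesis fixes the angle at $\frac{1}{2\mu}(\epsilon')^{n^2}\epsilon'$ and the conclusion at $d\epsilon'$, so there is no free constant to tune. A secondary gap is that you only treat $q'$ and $q''$ on \emph{adjacent} faces; since $d$ can lie between $d_j'$ and $d_j''$ for several indices $j$ simultaneously, the equidistant points can be separated by many edges, and your reduction would have to be iterated with the errors compounding.

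The paper avoids the lag estimate entirely by decomposing along the \emph{wavefront} rather than along the rays: it takes the locus of points at weighted distance $d$ between $r'$ and $r''$, lets $q'=p_0,p_1,\dots,p_{k+1}=q''$ be its intersections with the edges of $\calP$, so that each consecutive pair $p_i,p_{i+1}$ lies on a single face and is joined by (conceptual) rays of the bundle subtending a sub-angle $\delta_i$ at the source with $\sum_i\delta_i=\delta$. Lemma~\ref{lem:depsilon} applied to each facewise piece gives $\Vert p_i p_{i+1}\Vert_w \le \frac{2d\mu\delta_i}{(\epsilon')^{n^2}}$, and because this bound is \emph{linear in the angle}, summing over $i$ telescopes to $\frac{2d\mu\delta}{(\epsilon')^{n^2}} \le d\epsilon'$ with no extra constant. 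If you want to salvage your route, you would need either this angle-subdivision idea or an honest bound on the incidence angles; as written, step three does not close.
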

\begin{proof}
Let $\delta$ be the angle between $r'$ and $r''$.
Consider the locus of points at weighted Euclidean distance $d$ from the root of $\calT_R(u)$ that lie in the region bounded by rays $r'$ and $r''$.
Let $q'=p_0, p_1, p_2, \ldots p_k, q''=p_{k+1}$ be the sequence $\cal{S}$ of points of intersections of this locus with the edges of $\calP$.
Consider the rays in the bundle $B$ that strike points in $\cal{S}$.
Further, for $0 \le i \le k$, let $\delta_i$ be the angle between rays that strike points $p_i$ and $p_{i+1}$ of $\cal{S}$. 
Then the Lemma~\ref{lem:depsilon} shows that the weighted Euclidean distance between $p_i$ and $p_{i+1}$ is $\frac{2d\mu\delta_i}{(\epsilon')^{n^2}}$ for every $0 \le i \le k$.
The overall bound follows by summing up the weighted Euclidean distances' between successive points in $\cal{S}$.
\end{proof}

Before we proceed further, for simplicity, we establish another restriction on the angle that two successive rays in $\calT_R(u)$ make at their source $u$. To ensure that each triangle in our planar subdivision gets intersected by at least two rays, we further restrict angle between any two successive rays so that $d \epsilon' \leq \frac{l_{min}}{4}$.
Here, $d= nw_{max}l_{max}$ is the largest weighted Euclidean distance in $\calP$; $l_{max}$ and $l_{min}$ are respectively the lengths of edges with maximum and minimum Euclidean lengths in ${\cal P}$.
Thus this establishes another condition on $\epsilon'$, i.e.
\begin{equation}
\epsilon' \leq  \frac{l_{min}}{4n w_{max} l_{max}}  \label{eqnepsilon}
\end{equation}
We can now prove the following.

\begin{lemma}
\label{lem:errorX}
Let $u$ be a vertex and let $v$ be a point on an edge $e$ in $\cal{P}$ such that the weighted Euclidean distance from $u$ to $v$ via a Type-1 ray in  bundle $B$ is $d(u,v)$.
If $v$ is not a vertex then there exists at least one traced ray in $B$ that is incident onto a point $v'$ belonging to edge $e$ such that $d(u,v') \leq d(u,v)(1+ \eta \epsilon')$. 
Here, $\eta= 2(1+ \frac{1}{\cos{\theta_{cm}}})$ and $\theta_{cm}$ is the maximum value of the critical angle at any edge in $\cal{P}$. 
\end{lemma}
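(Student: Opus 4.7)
The plan is to bracket the true Type--I path by two successive rays in the bundle $B$, apply Corollary~\ref{lem:depsilonC} to bound their divergence on the wavefront at weighted distance $d(u,v)$, and then convert that wavefront bound into a bound on the crossing point of $e$ using the fact that every refracted ray meets $e$ at an angle bounded away from grazing by the critical angle $\theta_{cm}$.

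First I would formalize the bracketing. Let $r^*$ denote the (possibly untraced) Type--I ray in $B$ that realizes the path of length $d(u,v)$ from $u$ to $v$. By definition of a bundle, every ray in $B$ shares the same edge sequence and in particular crosses $e$. Using the angular ordering on rays in $B$ provided by the in-order traversal of $\calT_R(u)$ and the pairwise divergence proved in the previous lemma, I can select two successive rays $r_1,r_2 \in B$ such that the direction of $r^*$ lies between the directions of $r_1$ and $r_2$ at their respective sources. Let $v_1 = r_1 \cap e$ and $v_2 = r_2 \cap e$; then $v$ lies on the segment $v_1 v_2 \subset e$.

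Next I would apply Corollary~\ref{lem:depsilonC} on the wavefront at weighted Euclidean radius $d(u,v)$. Let $q_1, q_2$ be the points on $r_1, r_2$ at weighted Euclidean distance $d(u,v)$ from $u$. The corollary bounds the weighted Euclidean length of segment $q_1 q_2$ by $d(u,v)\,\epsilon'$. Since $v$ is the wavefront point on $r^*$ and $r^*$ is sandwiched between $r_1$ and $r_2$, I can argue that the weighted Euclidean distance from at least one of $q_1,q_2$ to $v$ is upper bounded by $d(u,v)\,\epsilon'$ (if necessary, refining $r_1,r_2$ to be truly successive rays in the $\calT_R(u)$-order so the segment $q_1 q_2$ genuinely contains the wavefront point corresponding to $r^*$); call this point $q_1$ without loss of generality.

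The third step is a simple trigonometric reduction. If $q_1$ already lies past $e$ along $r_1$, then $d(u,v_1) \le d(u,v)$ and the bound is immediate. Otherwise $q_1$ lies in the face $f_i$ on the $u$-side of $e$, so $r_1$ continues for an additional Euclidean length $|q_1 v_1|$ until hitting $e$ at angle of incidence $\theta_i'$. The perpendicular distance from $q_1$ to the supporting line of $e$ is $|q_1 v_1|\cos\theta_i'$, and this perpendicular distance is at most the Euclidean distance $|q_1 v|$ because $v \in e$; weighting by $w_i$ and using $\theta_i' \le \theta_{cm}$ I obtain
\[
d(u,v_1)-d(u,v)\;=\;w_i|q_1 v_1|\;\le\;\frac{|q_1 v|_w}{\cos\theta_{cm}}\;\le\;\frac{d(u,v)\,\epsilon'}{\cos\theta_{cm}}.
\]
Setting $v' = v_1$ gives $d(u,v') \le d(u,v)(1+\epsilon'/\cos\theta_{cm})$. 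The factor of $2$ and the additive ``$1$'' in $\eta = 2(1+1/\cos\theta_{cm})$ absorb (i) the potential need to choose between $v_1$ and $v_2$, (ii) the triangle-inequality slack along $e$ between $v$ and $v'$ on the way to completing the $u$-to-$v'$ path, and (iii) the case in which the segment $q_1 v$ transits between $f_i$ and $f_{i+1}$.

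The main obstacle, and the step that needs the most care, is the second one: justifying that $|q_1 v|_w \le d(u,v)\,\epsilon'$ given that the wavefront at radius $d(u,v)$ is a piecewise curve spanning several faces and $v$ need not lie on the straight segment $q_1 q_2$. I would address this by picking $r_1,r_2$ as genuinely successive rays in the $\calT_R(u)$-ordering (so the corollary's hypothesis on the angle $\frac{1}{2\mu}(\epsilon')^{n^2}\epsilon'$ between them is met), and by splitting the straight segment $q_1 v$ into its per-face pieces and applying the same Snell-law argument that underlies the corollary's proof, piece by piece, to obtain the required per-face bound.
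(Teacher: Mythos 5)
Your proposal is correct and follows essentially the same route as the paper's own proof: take a traced ray neighboring the optimal one, consider the point on it at weighted distance $d(u,v)$ from $u$, bound its weighted distance to $v$ by $d(u,v)\epsilon'$ via Lemma~\ref{lem:depsilon}/Corollary~\ref{lem:depsilonC}, and convert this to a bound on the extra length to the incidence point on $e$ by a projection argument using the fact that the angle of incidence is at most $\theta_{cm}$. The only differences are cosmetic (bracketing by two successive rays rather than one neighbor, and projecting onto the supporting line of $e$ rather than onto $\overline{vv'}$), and you correctly flag the same delicate multi-face step that the paper handles with Corollary~\ref{lem:depsilonC}.
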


\begin{proof}
Consider the Type-1 ray $r$ from $u$ to $v$, which is not traced and a ray $r'$ neighboring $r$ in the angular order of rays originated at $u$ is traced by the algorithm. 
(See Fig. \ref{fig:refrthmc}.)

\begin{wrapfigure}{r}{0.55\textwidth}
\centering
\includegraphics[totalheight=0.7in]{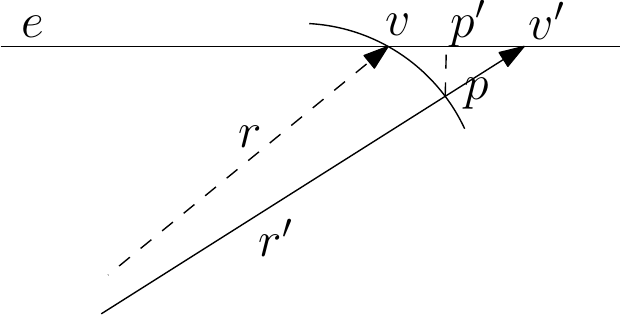}
\caption{\footnotesize Illustrating the construction in proving the Lemma \ref{lem:errorX} }
\label{fig:refrthmc}
\end{wrapfigure}

\ignore {
\begin{center}
\includegraphics[totalheight=1.0in]{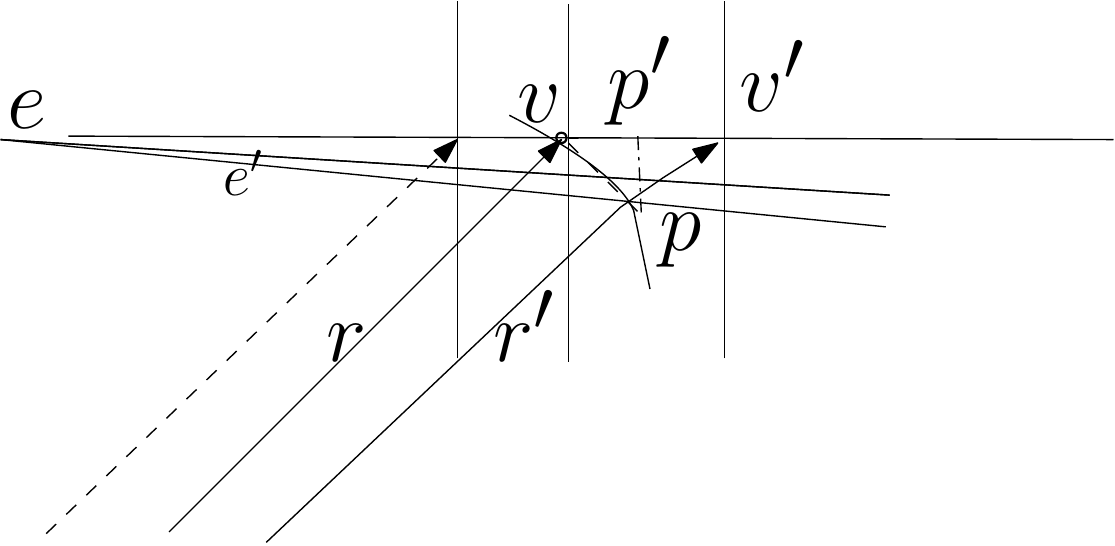}
}

Recall that the angle between any two successive rays is $\delta$. 
We consider the weighted Euclidean distance of points on the segment $(v,v')$.
If $d(u,v') \leq d(u,v)$ then we are done.
If not, then  consider the point $p$ on the ray $ r'$,  at distance $d(u,v)$ from $u$.
First, assume that $p$ and $v$ are on a common face. 
Let $w_1$ be the weight of the common face.
Thus using Lemma~\ref{lem:depsilon},  $d(p,v) \leq d(u,v) \epsilon'$, since the two rays $r$ and $r'$ belong to the same bundle. Moreover,
the Euclidean distance between $p$ and $v$ is upper bounded by $d(u,v)\frac{\epsilon}{w_1}$.
Note that the angle of incidence $\theta$ is less than or equal to the critical angle of $e$. 
Hence, the $\widehat{vv'p}$ angle is greater than $\frac{\pi}{2}-\theta_{cm}$.
Let $p'$ be the projection of $p$ onto $\overline{vv'}$.
Then $\Vert pp' \Vert = \Vert vp \Vert \sin{\widehat{pvv'}} = \Vert pv' \Vert \sin \theta'$.
Thus, $\Vert pv' \Vert \leq \frac{\Vert vp \Vert}{\cos{\theta_{cm}} }$
and 
$d(v,v') \leq d(u,v) \epsilon'(1+\frac{1}{\cos{\theta_{cm}}} )$.

Next consider the case in which $p$ and $v$ lie on different faces. 
Let the line segment $vp$ be crossing faces $f_1, f_2, \ldots, f_k$.
(Here, $f_1$ is the face containing $v$.)
We assume w.l.o.g. that $w_1 \leq w_2 \ldots \leq w_k$, so that rays $r$ and $r'$ diverge as they progress across these faces.
By Corollary~\ref{lem:depsilonC}, the weighted Euclidean distance between $p$ and $v$ is upper bounded by $d(u,v)\epsilon'$;
and the Euclidean distance between $p$ and $v$ is bounded by $\frac{d(u,v)\epsilon'}{w_1}$. 
Furthermore, the line segment $vv'$ is incident to $e$ at an angle larger than the critical angle of $e$. 
By similar arguments as above, the bound follows.
\end{proof}

\begin{cor}
\label{cor:errorX}
Let $u$ be a vertex and let $v$ be a point on an edge $e$ in $\cal{P}$ such that the weighted Euclidean distance from $u$ to $v$ via a Type-1 ray in  bundle $B$ is $d(u,v)$.
If $v$ is a vertex then there exists at least one traced ray in $B$ that is incident onto a point $v'$ belonging to an edge $e$ incident to $v$ such that $d(u,v') \leq d(u,v)(1+\eta \epsilon')$.
Here, $\eta= 2(1+ \frac{1}{\cos{\theta_{cm}}})$ and $\theta_{cm}$ is the maximum value of the critical angle at any edge in $\cal{P}$. 
\end{cor}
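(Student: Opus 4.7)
The plan is to mirror the proof of Lemma~\ref{lem:errorX} with two small changes to accommodate the new hypothesis: the ideal ray $r$ from $u$ now lands on a vertex instead of the interior of an edge, and the traced companion ray $r'$ may exit along any edge incident to $v$ rather than along a fixed edge. The starting observation is that when $v$ is a vertex, the ideal ray $r$ in $B$ sits on the common boundary between sub-bundles of $B$ whose edge sequences branch at $v$: rays slightly to one side of $r$ strike one edge incident to $v$, while rays on the other side strike a different edge incident to $v$. The bundle-splitting mechanism (the binary search invoked whenever a sibling pair's edge sequences diverge) therefore produces a traced ray $r' \in B$ on some side of $r$ whose angular separation from $r$ at $u$ is at most $\delta = \frac{1}{2\mu}(\epsilon')^{n^2}\epsilon'$; this $r'$ is incident on some edge $e$ incident to $v$ at a point $v'$, and is the ray used to establish the bound.

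Having fixed $r'$, I would let $p$ be the point on $r'$ at weighted Euclidean distance $d(u,v)$ from $u$ and apply Corollary~\ref{lem:depsilonC} to the pair $(r,r')$: since their angular spacing at $u$ is at most $\delta$, the corollary yields $d(p,v) \leq d(u,v)\epsilon'$. It remains to bound $d(v,v')$, the weighted distance along $e$ from $v$ to the point where $r'$ crosses $e$. Since $r'$ crosses $e$ rather than being critically reflected, its angle of incidence on $e$ is at most the critical angle, hence at most $\theta_{cm}$, so the angle $\widehat{vv'p}$ is at least $\frac{\pi}{2} - \theta_{cm}$. Projecting $p$ onto the line through $v$ and $v'$ exactly as in the proof of Lemma~\ref{lem:errorX} yields $\Vert pv' \Vert \leq \frac{\Vert pv \Vert}{\cos\theta_{cm}}$, and consequently $d(v,v') \leq d(u,v)\epsilon'\left(1 + \frac{1}{\cos\theta_{cm}}\right)$.

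To conclude, I would measure along $r'$: $d(u,v') \leq d(u,p) + d(p,v')$, which in the nontrivial case where $v'$ lies beyond $p$ equals $d(u,v) + d(p,v')$. The triangle inequality in the weighted metric gives $d(p,v') \leq d(p,v) + d(v,v')$, and substituting the two estimates above produces $d(u,v') \leq d(u,v)\left(1 + \epsilon'\left(2 + \frac{1}{\cos\theta_{cm}}\right)\right) \leq d(u,v)(1 + \eta\epsilon')$ with $\eta = 2\left(1 + \frac{1}{\cos\theta_{cm}}\right)$, as claimed.

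The main obstacle is justifying the first step rigorously: that the algorithm actually traces some ray $r' \in B$ within angular separation $\delta$ of the ideal ray $r$ at $u$. Because $r$ is precisely at a discontinuity of the edge-sequence map restricted to $B$, the bundle-splitting rule must be invoked at $v$, and one must verify that the resulting binary-search refinement produces sibling rays whose angular resolution at $u$ is at most $\delta$; once this is verified using the uniform discretization of $\calR(u)$ and the structure of $\calT_R(u)$, the geometric portion follows the template of Lemma~\ref{lem:errorX} essentially verbatim.
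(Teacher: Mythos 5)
The paper states this corollary without proof, presenting it as an immediate adaptation of Lemma~\ref{lem:errorX}; your argument supplies exactly that adaptation --- the same choice of a neighboring traced ray $r'$, the same application of Corollary~\ref{lem:depsilonC} to the point $p$ at matching weighted distance, and the same projection bound via $\cos\theta_{cm}$ on the edge incident to $v$ --- and your bookkeeping lands on the stated $\eta = 2\bigl(1+\frac{1}{\cos\theta_{cm}}\bigr)$. The step you flag as the main obstacle (that the bundle-splitting binary search at the vertex $v$ guarantees a traced ray within angular separation $\delta$ of the ideal ray through $v$) is treated no more rigorously in the paper's own proof of Lemma~\ref{lem:errorX}, so your account is consistent with, and if anything slightly more explicit than, what the authors intended.
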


The above Lemma and Corollary are used to bound the approximation of the distance from $u$ to a point with the condition that the point lies between any two successive rays in the same  bundle $B$.
A Type-I weighted shortest path from $u$ to an arbitrary vertex $v$, however, could use a critical source and hence use multiple bundles.
The following theorem consider these types of paths in ensuring an $(1+\epsilon)$-approximation as well as the time complexity. 

\begin{lemma}
\label{lem:refrnoncritical}
If the angle between a pair of successive rays is as specified in Lemma~\ref{lem:depsilon}, then a Type-I weighted shortest path from any vertex $u$ to another vertex $v$ in $\calP$ can be approximated to within a factor of $(1+ 2n^2 \eta \epsilon')$ using rays in $\calT_R(u)$, where $\eta= 2(1+ \frac{1}{\cos{\theta_{cm}}})$.
\end{lemma}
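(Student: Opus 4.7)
The plan is to decompose the Type-I shortest path $P$ from $u$ to $v$ at each critical source it encounters, apply the per-bundle approximation guarantee of Corollary~\ref{cor:errorX} (or Lemma~\ref{lem:errorX}) to each sub-path, and then compose the resulting multiplicative error factors. Let $c_0 = u, c_1, \ldots, c_{k+1} = v$ denote, in order along $P$, the endpoints together with the critical sources on $P$, and let $P_i$ be the sub-path from $c_i$ to $c_{i+1}$. By the definition of bundles in $\calT_R(u)$ (rays sharing a common suffix of their edge sequences), the only way a sub-path of $P$ can belong to more than one bundle in $\calT_R(u)$ is via a critical source: at each critical source $c_{i+1}$, Steiner rays in $\calR(c_{i+1})$ branch from the parent bundle, so $P_i$ lies entirely within a single bundle $B_i$. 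Combining Proposition~\ref{prop:edgeseqlen} with Proposition~\ref{prop:betwcrit}, the number of bundles meeting $P$ satisfies $k+1 = O(n^2)$.

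For each $i$, I would apply Corollary~\ref{cor:errorX} when $c_{i+1}$ is a vertex of $\calP$ and Lemma~\ref{lem:errorX} when $c_{i+1}$ lies in the interior of an edge, obtaining a traced ray $r_i \in B_i$ whose endpoint $c_{i+1}'$ satisfies $d(c_i, c_{i+1}') \leq (1+\eta\epsilon')\, d(c_i, c_{i+1})$. Since the algorithm initiates Steiner rays from every critical point of entry along the discretized wavefront, the shifted endpoint $c_{i+1}'$ lies near a discretized source from which the next traced ray $r_{i+1}$ continues. Concatenating $r_0, r_1, \ldots, r_k$ yields an approximate path $P'$ with
\[ \mathrm{cost}(P') \;\le\; \prod_{i=0}^{k}(1+\eta\epsilon')\, d(u,v) \;\le\; (1+\eta\epsilon')^{n^2}\, d(u,v) \;\le\; (1+2n^2 \eta\epsilon')\, d(u,v), \]
where the last inequality uses $(1+x)^N \leq 1+2Nx$ when $Nx$ is sufficiently small, a condition ensured by the restriction on $\epsilon'$ imposed by~(\ref{eqnepsilon}).

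The main obstacle is the handoff between consecutive bundles at critical sources: the ray $r_i$ terminates at a point $c_{i+1}'$ that may differ from the exact critical source $c_{i+1}$ used to define the next bundle $B_{i+1}$, so some care is needed to argue that continuing the approximation from $c_{i+1}'$ remains within the stated error. I would address this by showing that $c_{i+1}'$ lies within distance $O(d(c_i, c_{i+1})\epsilon'/w_{\min})$ of $c_{i+1}$ along the same edge, so that a Steiner-ray source sufficiently close to $c_{i+1}'$ is available in the discretization (by the density of critical points of entry discretized in $\calR(\kappa)$ for the corresponding critical segment). Any residual displacement can then be absorbed into the per-bundle factor $(1+\eta\epsilon')$, which is already accounted for in the product above, yielding the claimed $(1 + 2n^2 \eta \epsilon')$-approximation.
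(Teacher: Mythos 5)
Your proposal is correct and follows essentially the same route as the paper's proof: decompose the path at the critical sources where the discretization branches, apply Lemma~\ref{lem:errorX} to each piece to get a per-piece factor of $(1+\eta\epsilon')$, and compose over the $O(n^2)$ pieces (bounded via Proposition~\ref{prop:edgeseqlen}) to obtain $(1+2n^2\eta\epsilon')$. Your explicit treatment of the handoff between consecutive bundles at a critical source is the same issue the paper handles by bounding $d_P(p_1',p_1)$ and reapplying Lemma~\ref{lem:errorX} on the second piece.
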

\begin{proof}
We first consider when an optimum weighted shortest path from $u$ to $v$ traverses a point $p_1'$ such that a weighted shortest path to $p_1'$ using the rays in ${\calT}_R(u)$ has a critical source $p_1$ in-between.
Consider two such successive rays $r_i \in \calT_R(u)$ and $r_j \in \calT_R(u)$ that originate from $u$ and that lie
in the same bundle $B$ (refer Fig. \ref{fig:refrthmb}) such that
 $r_i $ is incident to a point $p_1''$ located on edge $e$, at an angle less than the critical angle of $e$
and $r_j \in \calT_R(u)$  is incident to a point $p_1$ located on edge $e$, at an angle equal to the critical angle of $e$.  
Due to inequality (\ref{eqnepsilon}), the condition imposed on $\epsilon'$ in the assumption of this theorem, and the construction of the rays, such rays are guaranteed to exist in the same bundle.

Let $d_{P}(x, y)$ represent the distance between points $x$ and $y$ along path $P$. 
Suppose an optimal weighted shortest path $P_{opt}$ intersects edge $e$ between $p_1''$ and $p_1$ at $p_1'$.
Let $P_{opt}$ refract at point $p_1'$ with $\theta_{opt}$ as the angle of refraction. 
Consider a path $P$ that approximates $P_{opt}$: it uses ray $r_j$ from $v_1$ to $p_1$ and then uses a weighted shortest path from $p_1$ to some point $v_2$. 
Then the modified path has length specified by $d_{P}(u, v) \leq d_{P}(u, p_1)+d_{P}(p_1, v)$ where $d_{P}(p_1,v)$ is a weighted shortest path from $p_1$ to $v$ in the discretized space.
By Lemma~\ref{lem:errorX},
$d_{P}(u, p_1) \le d_{opt}(u, p_1') (1+ \eta \epsilon' )$
with $d_P(p_1',p_1) \leq \eta \epsilon'$.
Further, $d_{opt}(p_1, v) \le d_{P}(p_1,p_1') + d_{opt}(p_1',v)$.
Applying  Lemma~\ref{lem:errorX} again, 
$d_{P}(p_1,v) \leq d_{opt}(p_1,v)(1+ \eta \epsilon')$ .
Thus $d_{P}(u,v) \leq  d_{opt}(v_1,v_2)(1+ 2\eta \epsilon')$.
Since the cardinality of any edge sequence of a path is  $O(n^2)$,
repeating the above analysis for all edges that the ray might encounter,
results in an error factor of $O(n^2 \eta \epsilon') $.
The case when no critical source is in-between is handled by  Lemma~\ref{lem:errorX}.
This completes the proof of Lemma~\ref{lem:refrnoncritical}.
\begin{figure}[h]
\begin{minipage}[t]{\linewidth}
\begin{center}
\includegraphics[totalheight=0.8in]{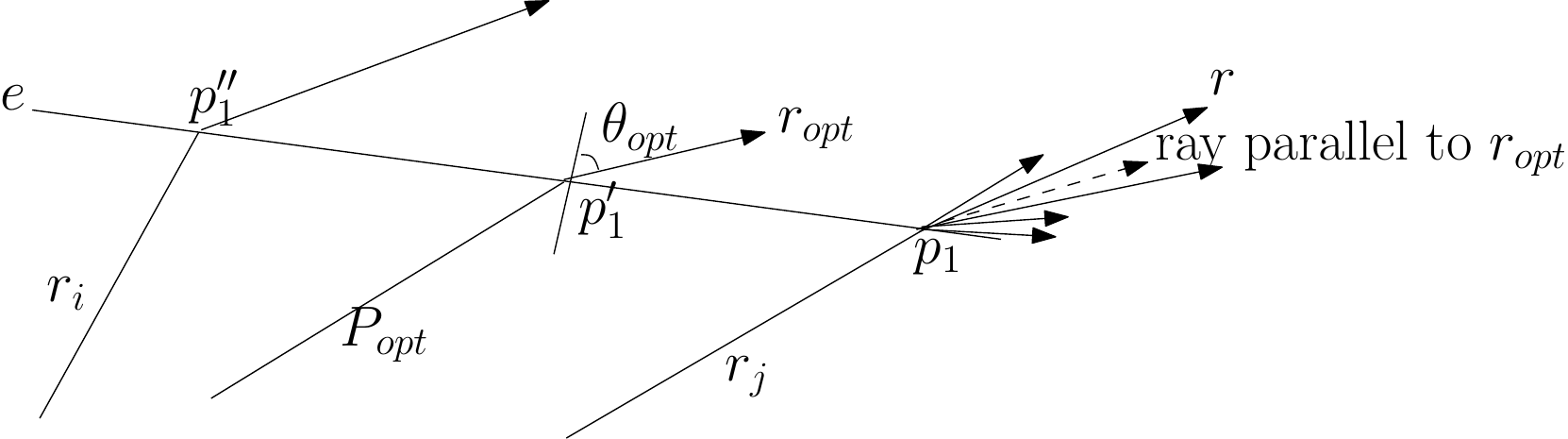}
\caption{\footnotesize Illustrating the construction in proving Lemma~\ref{lem:refrcritical} }
\label{fig:refrthmb}
\end{center}
\end{minipage}
\vspace*{-0.1in}
\end{figure}
\end{proof}

We next consider Type-II paths.
\begin{lemma}
\label{lem:refrcritical}
If the angle between any pair of successive rays is as specified in Lemma~\ref{lem:depsilon} and rays generated from a critical segment are $\epsilon'$ Euclidean distance apart, then a Type-II weighted shortest path from any vertex $u$ to another vertex $v$ in $\calP$ can be approximated to within a factor of $(1+ 2n^2 \eta \epsilon')$ using rays in $\calT_R(u)$, where $\eta= 2(1+ \frac{1}{\cos{\theta_{cm}}})$.
\end{lemma}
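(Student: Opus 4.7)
The plan is to reduce the analysis of a Type-II path to repeated applications of Lemma~\ref{lem:refrnoncritical}, glued together at the critical segment, and to absorb the extra discretization error introduced at the critical segment into the same $O(n^2 \eta \epsilon')$ budget. By Proposition~\ref{prop:betwcrit}, between two consecutive vertices $u$ and $v$ the optimal path $P_{opt}$ has at most one critical point of entry $y'$ onto some edge $e$ and at most one critical point of exit $y''$ from $e$; thus $P_{opt}$ decomposes canonically into three pieces: a pre-critical Type-I piece $P_1$ from $u$ to $y'$, a straight critical sub-segment $P_c$ along $e$ from $y'$ to $y''$, and a post-critical Type-I piece $P_2$ from $y''$ to $v$.

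First I would approximate $P_1$. Applying Lemma~\ref{lem:refrnoncritical} together with Lemma~\ref{lem:errorX} to the sub-path $P_1$ (viewed as a Type-I sub-path from the vertex $u$ to the point $y'$ on edge $e$), I obtain a ray $\tilde{r}_1$ in $\calT_R(u)$ that strikes $e$ at a point $\tilde{y}$ with $d(u,\tilde{y}) \le d_{opt}(u,y')(1 + 2n^2 \eta \epsilon')$, and such that $\tilde{r}_1$ belongs to a bundle whose angle of incidence on $e$ is within $K\epsilon'$ of the critical angle (the slack $K\epsilon'$ being the one introduced in the proof of Lemma~\ref{lem:depsilon}). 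By construction, the sibling ray of $\tilde{r}_1$ in the same bundle is either critically reflected at a point $y_1$ close to $\tilde{y}$ or refracts so that the associated Steiner rays $\calR(y_1)$ cover the angular gap; in either case a critical segment $\kappa$ originates at $y_1$ and is populated by rays whose source points are spaced $\epsilon'$ apart in Euclidean distance.

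Next I would approximate the transition across $P_c$. Let $y_2$ be the discrete source on $\kappa$ closest to the projection of $y''$ onto $\kappa$; by the $\epsilon'$ spacing hypothesis, $\Vert y_2 - y''\Vert \le \epsilon'$. The weighted cost of going from $y_1$ to $y_2$ along $\kappa$ differs from the weighted cost of $P_c$ by at most $w_{e}\epsilon' + w_e\Vert y_1-y'\Vert$, and both these quantities are already dominated by $d_{opt}(u,y')\eta\epsilon'$ by Lemma~\ref{lem:errorX} and the bound (\ref{eqnepsilon}) restricting $\epsilon'$ so that the geometric and weighted slacks both fit into the $\eta\epsilon'$ envelope used throughout. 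Thus the prefix $\tilde{r}_1$ concatenated with the portion of $\kappa$ from $y_1$ to $y_2$ approximates $P_1 \cup P_c$ to within a multiplicative $(1 + 2n^2 \eta \epsilon')$ factor.

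Finally, I would approximate $P_2$ by applying Lemma~\ref{lem:refrnoncritical} a second time, now to the Type-I sub-path starting at $y_2$ (treated as a node of $\calT_R(u)$ in the subtree hanging off the critical segment $\kappa$) and ending at the vertex $v$. Concatenation of the three approximations and summation of the errors gives the bound $(1+2n^2\eta\epsilon')$, since the cardinality of the combined edge sequence of $P_1$ and $P_2$ is still $O(n^2)$ by Proposition~\ref{prop:edgeseqlen}, and hence the two applications of Lemma~\ref{lem:refrnoncritical} only contribute $O(n^2 \eta \epsilon')$ in total. The main obstacle will be the middle gluing step: I have to verify that the $\epsilon'$ Euclidean spacing on $\kappa$, combined with the $K\epsilon'$ angular slack imposed near the critical angle in the proof of Lemma~\ref{lem:depsilon} and the Steiner rays $\calR(y_1)$, really does guarantee the existence of an approximating ray in the bundle emanating from $y_2$, and that the error incurred in transferring from the refracted sibling to the critically reflected sibling is absorbed in $\eta\epsilon'$ rather than blowing up when the incidence angle is close to the critical angle.
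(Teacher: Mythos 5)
Your proposal is correct and follows essentially the same route as the paper: split the Type-II path at the critical segment, apply Lemma~\ref{lem:refrnoncritical} (via Lemma~\ref{lem:errorX}) to the vertex-to-$\kappa$ portion, use the $\epsilon'$-spaced parallel rays on $\kappa$ to bracket the critical point of exit, and sum the errors over the $O(n^2)$-length edge sequence. Your three-piece decomposition and the explicit accounting of the gluing error at the critical segment (the $w_e\epsilon'$ and $\Vert y_1-y'\Vert$ terms) is somewhat more careful than the paper's two-piece version, which simply asserts the exit point is found "with an additive error of $\epsilon'$," but the underlying argument is the same.
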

\begin{proof}
An optimal Type-II path $P$ can be partitioned into two: a path $P_1$ from $v$ to $\kappa$ and a path $P_2$ from $\kappa$ to $w$.
Let the critical segment $\kappa$ have as a critical point of entry, $y$ on edge $e$.
We first show that a good approximation to the path $P_2$ from $\kappa$ to $w$ can be found.
Let $P$ be a weighted shortest path that originates at a point $p$ on $\kappa$ and strikes $w$.
This path get critically reflected by edge $e$; hence, exits $\kappa$.
Since rays are generated $\epsilon'$ apart, there exist a sibling pair of rays that originate from two
points $p'$ and $p''$ adjacent to $p$ and strike edges incident to $w$.
These rays are parallel to $P$ and a proof similar to Lemma~\ref{lem:errorX} shows that 
$P$ is approximated to well within a factor $(1+\eta \epsilon')$.
Furthermore, the points $p'$ and $p''$ can be discovered by a binary search for the point $p$, with an additive error of $\epsilon'$.
Finally, using Lemma~\ref{lem:refrnoncritical}, it is clear that the weighted shortest distance from $v$ to $y$ is approximated to within a factor of $(1+2n^2\eta \epsilon')$ where $d_1$ is the optimal weighted Euclidean distance from $v$ to $\kappa$. 
\end{proof}

We finally finish the analysis with the observation that an approximate weighted shortest path from $s$ to $t$ can be split into at most $n$ approximate weighted shortest paths.
Thus we have the following Theorem to summarize.
\begin{theorem}
Let ${\cal P}$ be a weighted triangulated polygonal domain with vertex set $V $. 
Let $\epsilon' = \min \{ \frac{\epsilon}{n^3\mu\eta},
\frac{l_{min}}{4n w_{max} l_{max}} \}$
where 
$\eta= (1+ \frac{1}{\cos ( \theta_{cm})})$
and $\theta_{cm}$ is the maximum critical angle of any edge in $\cal{P}$.
A weighted shortest path from $s \in V$ to $t \in V $ in ${\cal P}$ can be approximated to within a factor of $(1+\epsilon)$ using rays in $\cup_{u  \in V} \calT_R(u)$.
\end{theorem}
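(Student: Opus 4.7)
The plan is to take the optimal weighted shortest path $P^*$ from $s$ to $t$, chop it at the vertices of $\mathcal{P}$ it visits, approximate each vertex-to-vertex sub-path by rays from the tree $\calT_R$ rooted at the sub-path's source using Lemmas~\ref{lem:refrnoncritical} and \ref{lem:refrcritical}, and then verify that the chosen $\epsilon'$ is small enough for the compounded error across the $O(n)$ sub-paths to stay below $(1+\epsilon)$.

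First I would invoke the observation preceding the theorem: any weighted geodesic path can be split into at most $n$ sub-paths at the vertices of $\mathcal{P}$ that it visits. Writing $s=u_0,u_1,\ldots,u_k=t$ with $k\le n$ for this sequence of vertices, the restriction $P_i^*=P^*|_{[u_{i-1},u_i]}$ is itself an optimal weighted path between a pair of vertices. Each such $P_i^*$ is, by definition, either Type-I (no critical segment in between) or Type-II (uses at least one critical segment), so it falls under the hypothesis of exactly one of Lemma~\ref{lem:refrnoncritical} or Lemma~\ref{lem:refrcritical}.

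Next I would check that our choice of $\epsilon'$ satisfies the two prerequisites of these lemmas: the angular-separation bound of Lemma~\ref{lem:depsilon} and the edge-length condition (\ref{eqnepsilon}) guaranteeing at least two rays cross every face. The second argument of the \textbf{min} in the definition of $\epsilon'$ enforces (\ref{eqnepsilon}) verbatim, while the first argument $\frac{\epsilon}{n^3\mu\eta}$ is strictly stronger than the constant appearing in the hypothesis of Lemma~\ref{lem:depsilon}. Applying the appropriate per-piece lemma then produces, for each $i$, a ray-based path $P_i'$ from $u_{i-1}$ to (a point arbitrarily close to) $u_i$ using only rays in $\calT_R(u_{i-1})$, with weighted cost $|P_i'|\le (1+2n^2\eta\epsilon')|P_i^*|$.

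Concatenating $P_1',\ldots,P_k'$ yields a ray-based path $P'$ from $s$ to $t$ whose total cost satisfies $|P'|\le\sum_i(1+2n^2\eta\epsilon')|P_i^*|=(1+2n^2\eta\epsilon')|P^*|$. Plugging $k\le n$ and the first branch $\epsilon'\le\frac{\epsilon}{n^3\mu\eta}$ of the \textbf{min} gives an overall factor of at most $1+2n^3\eta\epsilon'\le 1+\tfrac{2\epsilon}{\mu}\le 1+\epsilon$ (using $\mu\ge 1$), which is the desired bound (any slack in the constant factor $2$ is absorbed by tightening $\epsilon'$ by a constant inside Lemma~\ref{lem:depsilon}).

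The main obstacle I anticipate is the mismatch at the seams: Lemmas~\ref{lem:refrnoncritical}/\ref{lem:refrcritical} and Corollary~\ref{cor:errorX} deliver a traced ray landing on an edge incident to $u_i$ rather than exactly at $u_i$, so the sub-path $P_{i+1}'$ must be started from a nearby surrogate point, and one has to show these endpoint displacements do not accumulate additively. The resolution is that each displacement has already been charged against the multiplicative $(1+\eta\epsilon')$ error of the step that produced it, so replacing $P_{i+1}^*$ by the optimal sub-path from the surrogate adds at most a further $(1+O(\epsilon'))$ factor per seam; the $n$ seams contribute an extra $(1+O(n\epsilon'))$ that is already dominated by the $(1+2n^3\eta\epsilon')$ bound above, and thus does not change the final approximation ratio.
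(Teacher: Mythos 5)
Your proposal matches the paper's approach exactly: the paper "proves" this theorem with the single observation preceding it, namely that the optimal path splits at vertices into at most $n$ sub-paths, each handled by Lemma~\ref{lem:refrnoncritical} (Type-I) or Lemma~\ref{lem:refrcritical} (Type-II), with $\epsilon'$ chosen so the combined error stays within $(1+\epsilon)$. Your write-up actually supplies more detail than the paper does (verifying the prerequisites on $\epsilon'$ and handling the seam/surrogate-endpoint issue), and is correct modulo the same constant-factor slack in $\eta$ that the paper itself leaves unaddressed.
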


\section{Details of the algorithm}
\label{sect:algodetails}

The algorithm is event-driven, where the events are considered by their weighted Euclidean distance from $s$.
The algorithm starts with initiating a set $\calR(s)$ of rays that are uniformly distributed around $s$.
With each traced ray $r$, we save the point of its origin, points of refraction, critical points of entry and critical points of exit along the path traced by $r$  i.e., as these event points occur in order.
As $r$ is traced, we append the event points to the list associated with $r$.
The various types of events that need to be both determined and handled are described in the following Subsections.

\subsection{Initiating rays from a vertex}
\label{subsect:initraysvert}

This procedure is invoked to initiate rays from a given vertex, say $v$, when the discrete wavefront strikes $v$.
Since $s$ is also a vertex of $\calP$, this procedure is also used in initiating rays from $s$ as well. 

Let $f'$ be the face along which a ray has been determined to strike a vertex $v$ on the face.
Let $\calF(v)$ be the collection of all the faces incident to $v$ except for $f'$.
The set $\calR(v)$ of rays are initiated from $v$, and all these rays lie on the set $\calF(v)$ of faces.
Due to Proposition~\ref{prop:noncrossing} (non-crossing property of weighted shortest paths), we do not initiate rays from $v$ over the face $f'$. 
Further, the angle between any two successive rays in $\calR(v)$ at $v$  are bounded by $\delta$ (whose value is bounded as described in Section \ref{sect:boundrays}).

\begin{wrapfigure}{r}{0.5\textwidth}
\begin{minipage}[t]{\linewidth}
\begin{center}
\includegraphics[totalheight=0.9in]{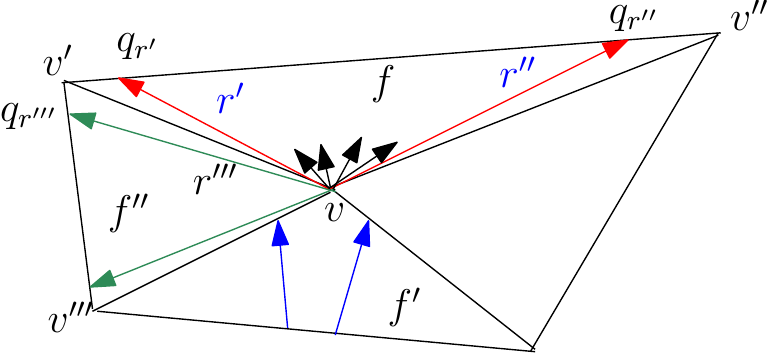}
\caption{\footnotesize Illustrating successive rays (shown in blue color) striking $v$ from $f'$ and a ray bundle being initiated on a face $f \in \calF(v)$; $r', r''$ is the sibling pair of $B_f$.  When $r'$ along $f$ and/or $r'''$ along $f''$ strike $v'v''$ and $v'v'''$ edges respectively, a discrete wavefront is initiated from $v'$.}
\label{fig:initverta}
\end{center}
\end{minipage}
\end{wrapfigure}

Consider any face $f \in \calF(v)$.
Let $vv', vv'', v'v''$ be the edges bounding $f$. 
(See Fig. \ref{fig:initverta}.)
The set $B_f \subseteq \calR(v)$ of rays that lie on face $f$ is a ray bundle:
every ray in $B_f$ strikes $v'v''$ before striking any other edge in $\calP$.
For every such ray bundle $B_f$, we find a sibling pair corresponding to $B_f$.
Let $r'$ (resp. $r''$) be the ray in $B_f$ that strikes $v'v''$ at $q_{r'}$(resp. $q_{r''})$ such that there does not exist a ray in $B_f$ that strikes $v'v''$ between $q_{r'}$ (resp. $q_{r''}$) and $v'$ (resp. $v''$).
We do binary search (with respect to edges $vv'$ and $vv''$) among the rays in $\calR(v)$ to find both the rays $r'$ and $r''$. We trace the sibling pair $r'$ and $r''$ as long as both the rays refract along the same sequence of edges $\mathcal{E}$ until the bundle is forced to split.
Let $e$ be the last edge in the edge sequence $\mathcal{E}$ at this stage.
Let $q'$ and $q''$ be the points on edge $e$ to which rays $r'$ and $r''$ are incident when they are traced.
Let $d_v$ be the weighted Euclidean distance from $s$ to $v$ i.e., when the discrete wavefront struck $v$.
The following sets of event points are pushed to the event heap:
the event point corresponding to tracing the ray $r'$ (resp. $r''$) to $q'$ (resp. $q''$) that occurs at distance equal to the weighted Euclidean distance $d_v$ added with the weighted distance along $r'$ (resp. $r''$) from $s$.
Further, the corresponding sibling pair is saved with each event point.
The splitting of sibling pairs is further detailed in Subsection \ref{subsect:splitsibpair}.

At the initialization step, let $f, f''$ be two adjacent faces in $\calF(v)$.
(See Fig. \ref{fig:initverta}.)
Let $vv'v''$ (resp. $vv'v'''$) be the triangle defining $f$ (resp. $f''$).
Let $r'$ be the ray in $\calR(v)$ that lies on $f$ and let $r'''$ be the ray in $\calR(v)$ that lies on face $f''$ such that no ray in $\calR(v)$ lies between $r'$ and $r'''$.
Also, let $q_{r'}$ (resp. $q_{r'''}$) be the point on edge $v'v''$ (resp. $v'v'''$) to which ray $r'$ (resp. $r'''$) is incident when traced.
Then the event point for initiating a discrete wavefront from $v'$ is pushed to the event heap with the key value $\min(d_v + w_{f} \Vert vq_{r'} \Vert + w_{v'v''} \Vert q_{r'}v' \Vert, d_v + w_{f''} \Vert vq_{r'''} \Vert + w_{v'v'''} \Vert q_{r'''}v' \Vert)$.

To improve the time complexity, we use the non-crossing property of (weighted) shortest paths (Proposition~\ref{prop:noncrossing}): whenever a ray bundle $B$ that originates at a vertex $v'$ strikes another vertex $v$, we save that information with $v$ so that whenever another ray bundle $B'$ that has originated from the same vertex $v'$ splits at vertex $v$ at an event corresponding to a larger distance, we do not split $B'$ in order to propagate from $v$.
To achieve this, whenever a vertex $v$ is struck by a ray bundle $B$, we save the origin of $B$ with $v$ in a set.
Further, whenever a different ray bundle $B'$ splits due to vertex $v$, we check this set and (i) update the bundle that first strikes $v$, say $B''$ (ii) eliminate progressing any bundle (including $B'$) whose origin is same as $B''$ if it crosses the bundle $B''$.

\subsection{Handling the ray striking an edge critically}
\label{subsect:crit}

Let $r_1, r_2$ be a sibling pair and let $e(v', v'')$ be a common edge to faces $f$ and $f'$.
Consider the following event point: A ray $r_1$ is traced along face $f$ and is critically incident to $e$ at point $y \in e$.
When this event occurs, we initiate two kinds of rays:
set $\calR(\kappa)$ of rays that originate from the critical segment $yv''$ (denoted by $\kappa$);
set $\calR(y)$ of Steiner rays that originate from the critical source $y$. 
In the following Subsections, we describe algorithms to both initiate these sets of rays and to set up ray bundles. 

\subsubsection{Initiating rays from a critical segment}
\label{subsect:initrayscritseg}

\begin{figure}[h]
\begin{minipage}[t]{0.49\linewidth}
\begin{center}
\includegraphics[totalheight=1.2in]{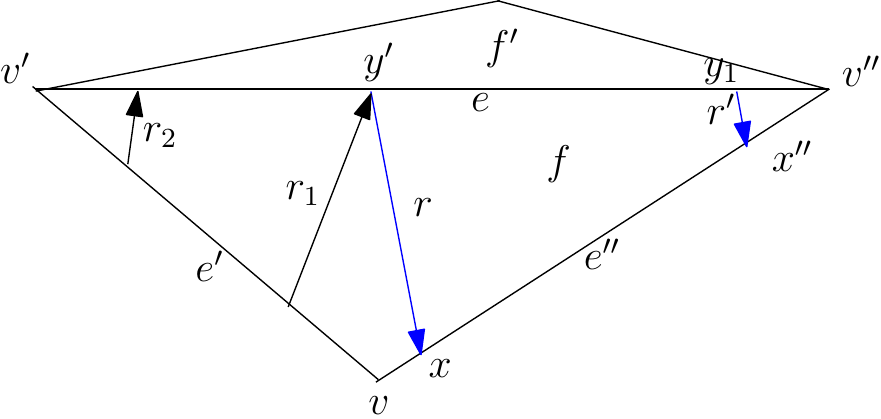}
\caption{\footnotesize Illustrating a sibling pair $r, r'$ originated at a critical segment that does not require splitting.} 
\label{fig:initcritseg}
\end{center}
\end{minipage}
\hspace*{0.04in}
\begin{minipage}[t]{0.49\linewidth}
\begin{center}
\includegraphics[totalheight=1.2in]{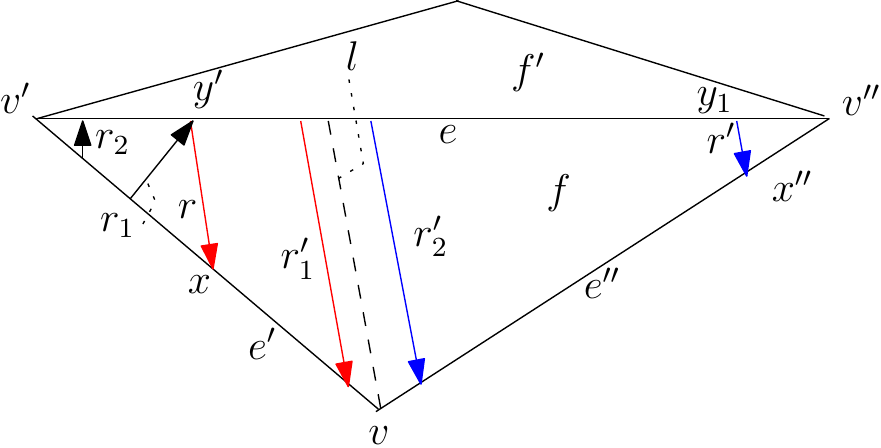}
\caption{\footnotesize Illustrating a sibling pair $r, r'$ originated at a critical segment that does require splitting.} 
\label{fig:splitsib-critseg}
\end{center}
\end{minipage}
\vspace*{-0.1in}
\end{figure}

We describe the procedure to initiate rays from the critical segment $\kappa$ first.
Given that a geodesic shortest path can be reflected back onto face $f$ from any point on $\kappa$, we initiate a discrete wavefront from $\kappa$.
Let $\overrightarrow{v}$ be a vector normal to edge $e$, passing through point $y'$ to some point in face $f'$.
The initiated rays get critically reflect back into face $f$ from $e$ while making an angle $-\theta_c(f, f')$ with $\overrightarrow{v}$.  
Note that the rays in the discrete wavefront originating from $\kappa$ are parallel to each other; further, to achieve an $(1+\epsilon)$-approximation, the distance between any two such successive rays along $e$ is upper bounded by $\epsilon'$ (here, $\epsilon'$ is defined as in Lemma~\ref{lem:refrnoncritical}).

Let $y_1$ be the point located on $e$ at Euclidean distance $\Vert y'v'' \Vert - \epsilon'$  from $y'$ such that $y_1$ is located between $y'$ and $v''$.
Let $r$ and $r'$ be two critically reflected (parallel) rays (making an angle $-\theta_c(f, f')$ with $\overrightarrow{v}$), that originate from points $y'$ and $y_1$ respectively.

If both the rays are incident to the same edge, say $e''$ of $f$, we set $r, r'$ as a sibling pair and all the rays between $r$ and $r'$ that potentially cross the same edge sequence in the future together with $r, r'$ is a ray bundle. 
(See Fig. \ref{fig:initcritseg}.)
Let $x$ (resp. $x''$) be the point at which the ray $r$ (resp. $r'$) is incident onto edge $e''$. 
This sibling pair is traced further across the polygonal domain.

If the rays in a sibling pair $r, r'$ are incident to distinct edges of $f$, we need to find two rays $r_1', r_2'$ in $\calR(\kappa)$, to respectively pair up with $r$ and $r'$, for forming two new sibling pairs which define two corresponding ray bundles.
(See Fig. \ref{fig:splitsib-critseg}.)
Further, we also need to store an event point in the event heap that corresponds to the current shortest distance to $v$ if it is via the rays $r_1'$ and $r_2'$. 
At that stage, a discrete wavefront from $v$ is initiated. 
The algorithm to split a sibling pair that originates from a critical segment is described in Subsection \ref{subsubsect:splitcritseg}.

\subsubsection{Initiating rays from a critical source}
\label{subsect:initrayscritsrc}

There are three cases to consider based on the origin of the sibling pair, $r_1$ and $r_2$, of a bundle $B$ that represents  the discrete wavefront from the critical source $y'$ on an edge $e$:

\begin{enumerate}[(i)]
\item $r_1, r_2$ have originated from some critical segment
\item $r_1, r_2$ have originated from some vertex
\item $r_1, r_2$ have originated from (possibly distinct) nodes of a tree of rays, say $\calT_{R}(w)$
\end{enumerate}

We first rule out case (i) as it  cannot occur, since from  Proposition~\ref{prop:betwcrit}, 
a critical point of exit and a critical point of entry cannot occur in succession along a geodesic path.
Since case (iii) is a generalization of case (ii), herewith we explain event handling for case (iii).

Let $u \in \calT_{R}(w)$ be the origin of $r_1$.
Note that $u$ may be $w$ itself or a distinct critical source in $\calT_{R}(w)$.
Let $r,r'$ be successive rays in $\calR(u)$ such that $r$ is refracted and $r'$ is critically reflected.
Let $y'$ be the point at which $r'$ is incident to $v'v''$.
A node corresponding to critical source $y'$ is inserted as a child of $u$ in $\calT_{R}(w)$; 
further, a set $\calR(y')$ of rays are initiated from $y'$. 
Let $\theta'$ be the angle at which $r$ refracts (measured with respect to the vector $\overrightarrow{v}$ normal to edge $v'v''$).
Let $v'v''v'''$ be the face $f'$ onto which ray $r$ is refracted into.
Also, let $\overrightarrow{v_1}$ and $\overrightarrow{v_2}$ be two vectors in face $f''$ with origin $y'$ such that they respectively make $\frac{\pi}{2}$ and $\theta'$ angles with respect to $\overrightarrow{v}$. 
(See Fig. \ref{fig:initcritsrc}.)
The rays in $\calR(y')$ are uniformly distributed in the cone $C(y', \overrightarrow{y'v''}, \overrightarrow{v_2})$.
A ray bundle and a sibling pair corresponding to that ray bundle are determined:
Let $r_3$ be the ray in $\calR(y')$ that subtends the minimum required angle, $\epsilon'$ with edge $e$.
And let $r_2$ be the ray that originates from $y'$ and is parallel to ray $r$ in $f''$. 
Then $r_2, r_3$ are extremal pairs of rays in $\calT_{R}(w)$.
The bundle $B$ is now modified to  comprise all the rays that lie between $r_2$ and $r_3$, with $r_2$ and $r_3$ together forming a sibling pair of $B$.
This sibling pair is traced over the face $f''$.

When traced, if both $r_2$ and $r_3$ are incident to same edge of face $f''$, then they together will continue to be a sibling pair.
Otherwise, to form two sibling pairs, new rays to pair up with $r_2$ and $r_3$ are found from the sets of rays initiated in $\calT_R(w)$; the procedure is detailed in Subsection \ref{subsubsect:splittreeofrays}.
These sibling pairs' are traced and the corresponding event points are pushed to the event heap.

\subsection{Extending rays  across a region}
\label{subsect:extendrays}

 Given the bundles of rays that strike an edge $e=(a,b)$, the rays need to be extended across the region they enter.
 Let $T=(a,b,c)$ be the triangular region. The rays that refract after striking $e$, as well as the critical rays that 
 originate from $e$, need to be extended to determine their strike points on the other two edges $(a,c)$ and $(b,c)$.
 
 Extending a bundle may involve splitting a sibling pair. 
 While the split operation details will be defined subsequently,
 we discuss how to manage the extension of the set of bundles. 
 Note that only one of the bundles that strike $(a,b)$, 
 will lead to a weighted shortest path from the origin $s$ to the vertex $c$ from amongst the bundles that strike edge $(a,b)$. 
Furthermore, from a pair of bundles that cross each other, only one will be retained, due to the non-crossing property of shortest paths; and, from a  pair of bundles that split at vertex $c$ only one bundle need be split, i.e., the one that determines the shortest distance to $c$.

 Determining a weighted shortest path to $c$ via rays contained in a bundle $B$ will be detailed later. 
 Given a weighted shortest path to $c$ via each bundle in the current set of bundles ${\cal B}$ that strike $e$, one can determine a weighted shortest path to $c$ with respect to ${\cal B}$. The bundle $B' \in {\cal B}$ that determines
 a weighted shortest path is maintained. Note that ${\cal B}$ changes as bundles strike edge $e$.
 The bundle $B'$ partitions the bundles in ${\cal B}$ into two sets of bundles, one that strike
the edge  $(a,c)$ and the other set that strikes  $(b,c)$. 
 These bundles are traced after processing their strike on their corresponding edge.
 To ensure that the relevant bundles in ${\cal B}$ are traced across an edge, say $e= (a,c)$, we determine a shortest path to the edge $e$ via rays in bundle $B \in {\cal B}$.
 When this event occurs, the bundle is traced across edge $e$ if it does not violate non-crossing property. 
 A weighted shortest path to the edge is determined by a binary search over the space of rays in the bundle. 
This binary search procedure is similar to the process of determining a weighted shortest path to a specific vertex by a binary search.
 
Note that the above  methodology is true for all bundles, independent of the source being a vertex or a critical segment.
We next determine how to efficiently determine the splitting of sibling pairs that define a bundle.

\subsection{Splitting a sibling pair}
\label{subsect:splitsibpair}

When a sibling pair needs to be split i.e.,  a ray bundle needs to be partitioned into two, then, based on the origin of the two rays in the sibling pair being considered, the following procedures are invoked.

\subsubsection{Pair that originates from a critical segment}
\label{subsubsect:splitcritseg}

\begin{figure}[h]
\begin{minipage}[t]{\linewidth}
\begin{center}
\includegraphics[totalheight=1.1in]{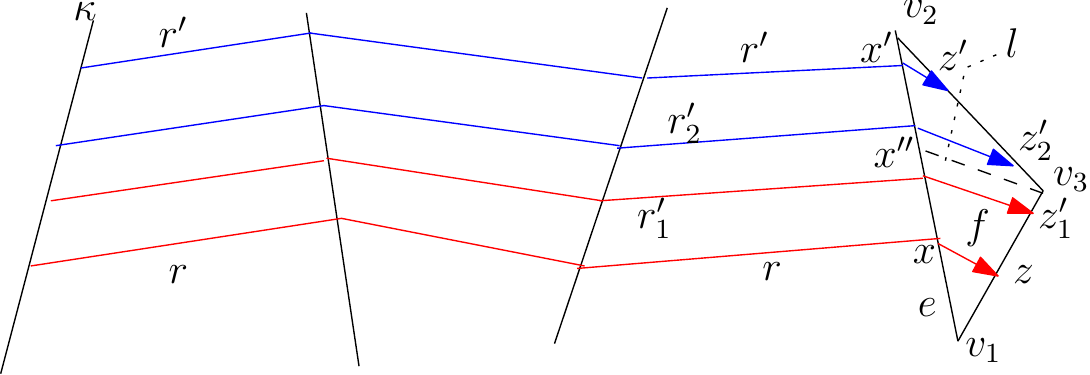}
\caption{\footnotesize Illustrating a ray bundle split when that ray bundle is originated from a critical segment $\kappa$; two new sibling pairs are also shown}
\label{fig:split-origcrigseg}
\end{center}
\end{minipage}
\vspace*{-0.2in}
\end{figure}

Consider a sibling pair $r, r'$ that originates from a critical segment $\kappa$. 
(See Fig. \ref{fig:splitsib-critseg}.)
Let $\calE$ be the edge sequence of $r$ and $r'$ that the two rays have in common till they are incident to an edge $e$ of face $f$ and let $B$ be the corresponding ray bundle.
Let $r$ be incident to $x \in e'$ and let $r'$ be incident to $x'' \in e''$.
Given that $r$ and $r'$ are parallel between any two successive edges in $\calE$, following are the possibilities:
$r$ and $r'$ refract from $x$ and $x''$ respectively at the same angle;
$r$ and $r'$ critically reflect from $x$ and $x''$.
Since the latter kind of rays are not geodesic paths (Proposition \ref{prop:betwcrit}), we focus on the former.
If the rays refracted from $x$ and $x''$ are both incident to the same edge of $f$, then there is no need to split the sibling pair $r, r'$.
Otherwise, as explained below, we find new rays $r_1'$ and $r_2'$ that originate from $\kappa$ to respectively pair with $r$ and $r'$.

Let $v_1, v_2, v_3$ be the vertices of $f$ where the endpoints of $e$ are $v_1, v_2$. 
(See Fig. \ref{fig:split-origcrigseg}.)
Let $l$ be a line segment on face $f$ such that it is parallel to line segment $f \cap r$ ($r$ refracted at $p$) and passes through vertex $v_3$ of face $f$. 
Let $x'' \in v_1v_2$ be its other endpoint.
We interpolate over rays in $\calR(\kappa)$ to find a ray $r_1' \in \calR(\kappa)$ (resp. $r_2' \in \calR(\kappa$)) such that the section of $r_1'$ (resp. $r_2'$) on $f$ lies between $l$ and the section of $r$ (resp. $r'$) on $f$.
Then the ray bundle $B$ splits into: a bundle with sibling pair $r, r_1'$ and another with $r', r_2'$ as its sibling pair.
Since between any successive edges in $\calE$, rays $r$ and $r'$ are parallel, an interpolation in possible.
Based on the ratio of $\Vert xx'' \Vert$ to $\Vert xx' \Vert$, we interpolate to find a point $q$ on $\kappa$ so that a critically reflected ray $r'''$ from $q$ reaches $v_3$.
The ray $r_1'$ (resp. $r_2'$) is the one whose origin is closest to $q$ among all the rays between $r'''$ and $r$ (resp. $r'$).
These sibling pairs' are pushed to the event heap with the key values being the respective weighted Euclidean distances' from $s$ to the points at which these pairs strike the edges $v_1v_3$ and $v_2v_3$. 

Let the ray $r_1'$ be incident to $v_1v_3$ at $z_1'$ and let $r_2'$ be  incident to $v_2v_3$ at $z_2'$.
An event point to initiate a discrete wavefront from $v_3$ is pushed  to the heap with the key value $\min(d_{z_1'} + w_{v_1v_3} \Vert z_1'v_3 \Vert, d_{z_2'} + w_{v_2v_3} \Vert z_2'v_3 \Vert)$ where $d_{z_1'}$ is the weighted Euclidean distance from $s$ to $z_1'$ and $d_{z_2'}$ is the weighted Euclidean distance from $s$ to $z_2'$.

\subsubsection{Pair that originates from a tree of rays}
\label{subsubsect:splittreeofrays}

Let $e_i, e_j, e_k$ be the edges bounding face $f$.
Also, let $r', r''$ be a sibling pair of $\calT_R(w)$ such that both of them strike edge $e_i$ of $f$.
With further expansion of the wavefront, let $r'$ strike edge $e_j$ at $q'$ and let $r''$ strike edge $e_k$ at $q''$.
This requires us to split the ray bundle, $B$, corresponding to sibling pair $r', r''$.
Let $P'$ (resp. $P''$) be the critical ancestor path of $r'$ (resp. $r''$).
Further, let $REG$ be the open region bounded by rays along critical ancestor paths $P', P''$, the rays $ r', r''$ and the line segment $q'q''$.
Let $R_1$ (resp. $R_2$) be the set of rays such that a ray $r \in R_1$ (resp. $r \in R_2$) if and only if $r$ originates from a critical point of entry or a critical source located on the critical ancestor path $P'$ (resp. $P''$) and the ray $r$ lies in $REG$.
With binary search over the rays in $R_1$, we find a ray $r_1 \in R_1$ and with binary search over $R_2$ we find a ray $r_2$ such that $r_1$ intersects $e_j$, $r_2$ intersects $e_k$, and $r_1$ and $r_2$ are either successive rays originating from the same origin or adjacent origins on a critical ancestor path. The binary search
is performed over the nodes on the critical path, and at each node $v$ the two extreme rays in the set of rays $\calR(v)$ that originate  at that node are used to decide whether the two rays $r_1$ and $r_2$ lie within $\calR(v)$ or not. 
This leads to splitting ray bundle  $B$ into two $B_1$ and $B_2$ 
with $r_1, r'$ and $r'', r_2$ sibling pairs, respectively.
Events corresponding to the ray bundles created are pushed to the event queue reflecting the split.
Further, we also push the event corresponding to initiating a discrete wavefront from $v$.
Note that splitting a sibling pair that originates from a vertex is just a special case of the procedure listed above. 

\section{Improving the time complexity: interpolating versus tracing rays}
\label{sect:interpol}

To improve the time complexity of our algorithm while obtaining a weighted shortest path with $(1+\epsilon)$ multiplicative error, instead of tracing a ray $r$ across an edge sequence to find its point (and angle) of incidence onto an edge $e$, we show that it suffices to interpolate the position and the angle of refraction of $r$ from the last edge of $\calE$.
For this, we need an assumption (albeit mild) that the angle between any two successive rays in any ray bundle is upper bounded by $\pi - \frac{1}{n^n}$.

\begin{wrapfigure}{r}{0.5\textwidth}
\centering
\begin{minipage}[b]{.4\textwidth}
\centering{\includegraphics[totalheight=1.5in]{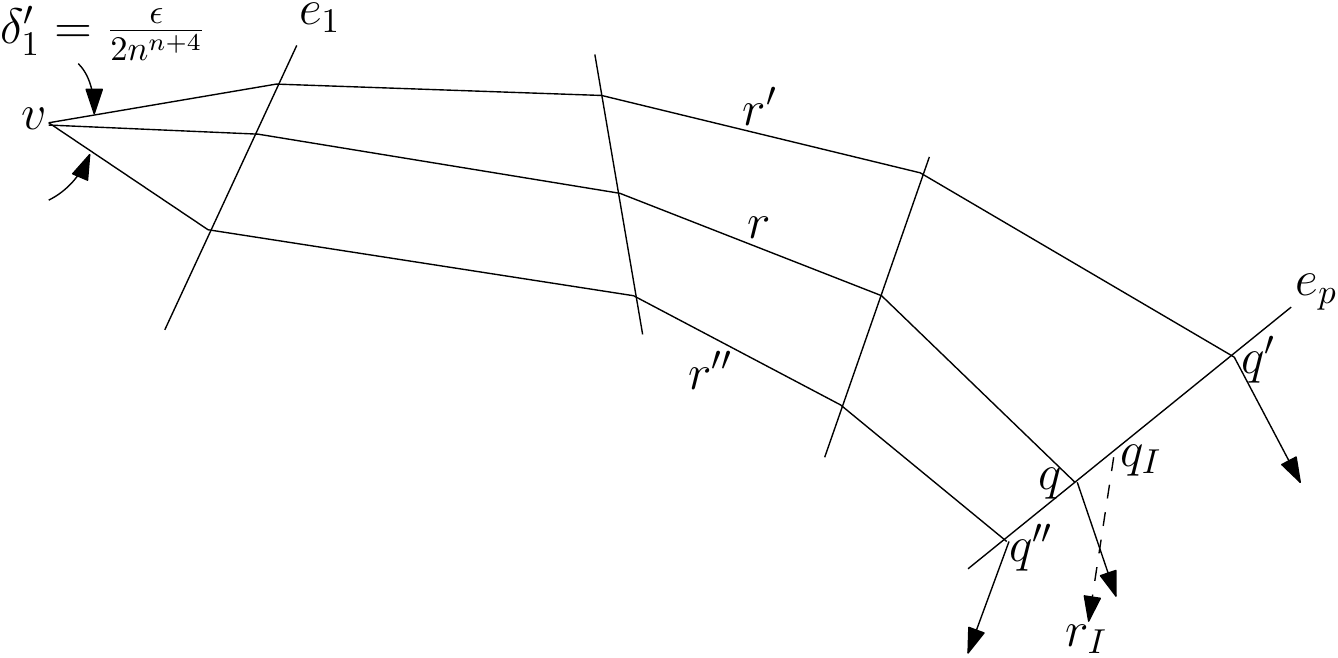}}
\caption{\footnotesize Illustrating the error in interpolating versus tracing a ray across an edge sequence $\calE$}
\label{fig:interptrace}
\end{minipage}
\end{wrapfigure}

From Lemma~\ref{lem:depsilon} and Lemma~\ref{lem:refrcritical}, we note that in order to find an $(1+\epsilon)$-approximate weighted shortest path, the cardinality of the set $S$ of rays that originate from a vertex, or a critical source, is chosen to be $O(\frac{\mu}{(\epsilon')^{n^2+1}})$.
Let $\calE$ be an edge sequence of a sibling pair $r', r''$, and let $e$ be the last edge of $\calE$.
In doing binary search over $S$ to find a ray $r \in S$, we need to trace $O(\lg(\frac{\mu}{\epsilon'}) + n^2\lg({\frac{1}{\epsilon'}}))$ rays in $S$ to $e$.
Taking into account the cardinality of $\calE$ (which is $O(n^2)$ from the Proposition~\ref{prop:edgeseqlen}), the time it takes for binary search to find a ray from the source is $O(n^2 \lg(\frac{\mu}{\epsilon'}) + n^4\lg({\frac{1}{\epsilon'}}))$.
As detailed below, we reduce the time involved in this by interpolation. 

Let $r'$ and $r''$ be two sibling rays in a bundle $B$ in ${\calT}_R(v)$, with $\delta'_1 (= \frac{\epsilon}{2n^{n+4}})$ being the angle between them.
(See Fig. \ref{fig:interptrace}.)
Consider a ray $r \in B$ with origin $v$ and that lies between $r'$ and $r''$.
Also, let $e_p$ be an edge that is intersected by all three rays $r, r',$ and $r''$, respectively at points $q, q'$ and $q''$.
We denote $\frac{||q'q||}{||q'q''||}$ with $\gamma$.
(This ratio is known if the target point $q$ on $e$ is known even though $r$ itself may not be known.)
The interpolation of ray $r$ is denoted with $r_I$ and it is characterized as follows:

\begin{wrapfigure}{r}{0.5\textwidth}
\centering
\begin{minipage}[b]{.4\textwidth}
\centering{\includegraphics[totalheight=1in]{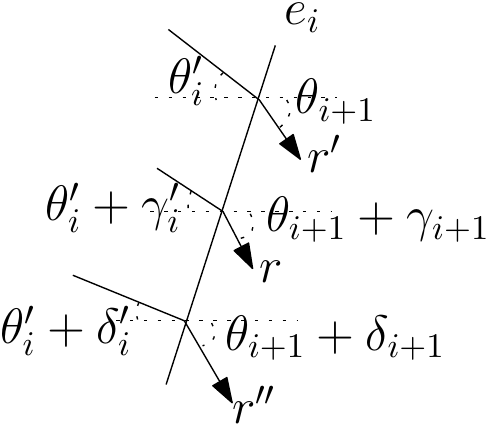}}
\caption{\footnotesize Illustrating the angle of incidence and refraction of rays $r', r''$, and $r'$ at an edge $e_i \in \calE$}
\label{fig:interpolangles}
\end{minipage}
\end{wrapfigure}

\begin{itemize}
\vspace{-0.1in}
\item[(i)]
Let $f_p$ be the face bounded by $e_{p-1}$ and $e_p$.
The angle between the vectors induced by $f_p \cap r_I$ and $f_p \cap r'$ is upper bounded by $\delta_1'\gamma$.

\item[(ii)]
The Euclidean distance of interpolated point of incidence $q_I$ of ray $r_I$ from $q'$ is $q' + \gamma \Vert q'q'' \Vert$.
\vspace{0.05in}
\end{itemize}

\begin{lemma}
\label{lem:interpolate}
Let $\calE$ be the edge sequence associated with a ray bundle $B$ in ${\calT}_R(v)$, having siblings $r'$ and  $r''$; and, let $e$ be the last edge in $\calE$.
Let $r \in B$ be a ray that lies in between $r'$ and $r''$ and let $q$ be the point of incidence of $r$ on edge $e$ when traced across $\calE$.
Also, let $q_I$ be the point due to interpolation. 
Given that the angle $\gamma_1$ between $r'$ and $r''$ is upper bounded by $\frac{\epsilon}{2n^{n+4}}$, the weighted distance $d$ between $q_I$ and $v$ is approximated by $r_I$ within a multiplicative error of $(1+\frac{\epsilon}{n^2})$, and the angle of incidence of $r_I$ is $(1+\epsilon)$-approximated with respect to $r$.
\end{lemma}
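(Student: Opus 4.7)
The plan is to compare the true ray $r$ with the interpolated ray $r_I$ edge by edge along the sequence $\calE=(e_1,\ldots,e_p)$, where $p=O(n^2)$ by Proposition~\ref{prop:edgeseqlen}. At each edge $e_i$ I will track two quantities: the angular discrepancy $\alpha_i$ between $r$ and $r_I$ immediately after refraction at $e_i$, and the positional discrepancy $\sigma_i$ between the two corresponding points of incidence on $e_i$. The angle-of-incidence claim at $e=e_p$ is then $\alpha_p$, and the weighted-distance claim is obtained from $\alpha_p$ and $\sigma_p$ by the same triangle-inequality argument used in Lemma~\ref{lem:depsilon}.

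\emph{Base case.} On the first face, both $r$ and $r_I$ emanate from $v$ and lie in the angular cone between $r'$ and $r''$. The linear interpolation of the strike point on $e_1$ agrees with exact angular interpolation to first order in $\gamma_1$, so $\alpha_1 = O(\gamma_1)$ and $\sigma_1 = O(\gamma_1^2\,l_{\max})$.

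\emph{Inductive step.} Linearising Snell's law as in the derivation of equation~(\ref{eq:delta}), a refraction at $e_i$ carries an angular error $\alpha_i$ to at most $\beta\,\alpha_i$ plus a correction $O(\sigma_i/l_{\min})$ reflecting the fact that incidence at a slightly displaced point perturbs the local refraction geometry; here $\beta = \max_i \tfrac{\cos\theta_i'}{\cos\theta_{i+1}} \le 1/\epsilon'$ by the same truncation argument (staying $K\epsilon'$ away from every critical angle) used in the proof of Lemma~\ref{lem:depsilon}. The positional error satisfies $\sigma_{i+1}\le \sigma_i + \alpha_{i+1}\,l_{\max}$, since two nearly parallel rays diverge at most linearly over a face. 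Unrolling the coupled recurrence over $p\le c n^2$ edges gives $\alpha_p = O(\gamma_1\,\beta^{p})$ and $\sigma_p = O(\alpha_p\,l_{\max}\,p)$.

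\emph{Plugging in parameters.} Substituting $\gamma_1 \le \tfrac{\epsilon}{2n^{n+4}}$, $\beta \le 1/\epsilon'$, and the value of $\epsilon'$ from Section~\ref{sect:boundrays}, both $\alpha_p$ and $\sigma_p/(w_{\min}\Vert vq\Vert)$ collapse to $O(\epsilon/n^2)$. The angle-of-incidence claim is then immediate from $\alpha_p$. For the weighted-distance claim, sum the per-face length discrepancy, which is at most $w_{\max}(\sigma_i + \alpha_i\,l_{\max})$, over the $O(n^2)$ faces and divide by $d\ge w_{\min}\Vert vq\Vert$ to obtain a multiplicative error of $1 + O(\epsilon/n^2)$.

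\emph{Main obstacle.} The principal difficulty is controlling the amplification factor $\beta^{p}$ with $p = \Theta(n^2)$ and $\beta$ potentially as large as $1/\epsilon'$. This is precisely why $\gamma_1$ must be taken as aggressively small as $\tfrac{\epsilon}{2n^{n+4}}$: even after an exponential-in-$n^2$ blow-up the error has to fit inside the $\epsilon/n^2$ budget required when the lemma is invoked $O(n)$ times for the $O(n)$ subpaths stitched together in the overall algorithm. A secondary subtlety is that near-critical incidences would push $\cos\theta_{i+1}$ arbitrarily close to zero; as in Lemma~\ref{lem:depsilon}, these are handled by restricting attention to incidences at least $K\epsilon'$ away from any critical angle, which bounds $\beta$ uniformly by $1/\epsilon'$.
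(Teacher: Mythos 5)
There is a genuine gap, and it is in the step you flag as the ``main obstacle.'' Your recurrence propagates the \emph{absolute} angular discrepancy between $r$ and $r_I$ multiplicatively, giving $\alpha_p = O(\gamma_1\,\beta^{p})$ with $\beta \le 1/\epsilon'$ and $p=\Theta(n^2)$. With $\gamma_1 \le \frac{\epsilon}{2n^{n+4}}$ this does \emph{not} collapse to $O(\epsilon/n^2)$: the product $\frac{\epsilon}{2n^{n+4}}\cdot(1/\epsilon')^{n^2}$ is enormous for any $\epsilon'$ bounded away from $1$ (compare Lemma~\ref{lem:depsilon}, where absorbing exactly this $\beta^{n^2}$ factor forced the much smaller choice $\delta_1 \le \frac{(\epsilon')^{n^2}\epsilon'}{2\mu}$). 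So the ``plugging in parameters'' step is arithmetically false, and your framework cannot reach the stated bound under the stated hypothesis on $\gamma_1$.

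The idea you are missing is that the lemma asserts a \emph{relative} error, and the paper's proof exploits a cancellation that your formulation forgoes. Writing the offsets of $r$ and of $r_I$ from the sibling $r'$ at edge $e_p$, both are governed by the \emph{same} leading-order amplification $\frac{w_1}{w_p}\prod_i\frac{\cos\theta_i'}{\cos\theta_{i+1}}$ applied to (approximately) the same initial quantity, so this factor cancels in the ratio $R_p=\frac{\delta_p^I-\gamma_p}{\gamma_p}$. What survives are only the second-order correction terms $\sum_j\frac{\delta_j}{\cos\theta_j}$ coming from the linearization of Snell's law, and these are bounded using the standing assumption $\theta_i' \le \frac{\pi}{2}-\frac{1}{n^n}$ (hence $\cos\theta_i \ge \frac{1}{n^n}$) together with $\delta_1' \le \frac{\epsilon}{2n^{n+4}}$, yielding $R_p \le n^2\max_i\frac{2\delta_i}{\cos\theta_i} \le \frac{\epsilon}{n^2}$. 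This is precisely why the hypothesis is only exponential in $n\lg n$ rather than in $n^2\lg(1/\epsilon')$. A secondary mismatch: $r_I$ is never traced from $v$; it is defined directly on the last face by interpolating the siblings' incidence data on $e_p$, so an induction that transports a discrepancy of two rays emanating from $v$ across every face does not model the construction. If you recast your $\alpha_i,\sigma_i$ bookkeeping as a bound on the ratio of offsets from $r'$ rather than on their difference, the $\beta^{p}$ term disappears and the argument goes through along the paper's lines.
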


\begin{proof}
Let $e_1, e_2, \ldots, e_p$ be the edge sequence in $\cal{E}$.
As in Lemma~\ref{lem:depsilon}, angles of incidence and angles of refractions of rays $r'$ and $r''$ are as mentioned here.
Let $\theta_1', \theta_2', \ldots, \theta_{p}'$ be the angles at which the ray $r'$ is incident on edges $e_1, e_2, \ldots, e_{p}$ respectively.
(See Fig. \ref{fig:interpolangles}.)
Let $\theta_2, \theta_3, \ldots, \theta_{p+1}$ be the angles at which the ray $r'$ refracts at edges $e_1, e_2, \ldots, e_{p}$ respectively.
Similarly, let $\theta_1'+\delta_1', \theta_2'+\delta_2', \ldots, \theta_{p}'+\delta_{p}'$ be the angles at which the ray $r''$ is incident on edges $e_1, e_2, \ldots, e_{p}$ respectively.
And, let $\theta_2+\delta_2, \theta_3+\delta_3, \ldots, \theta_{p+1}+\delta_{p+1}$ be the angles at which the ray $r''$ refracts at edges $e_1, e_2, \ldots, e_{p}$ respectively.
For every $i$, if a critical angle exists for edge $e_i$, assume that both $\theta_i'$ and $\theta_i' + \delta_i'$ are less than that critical angle.
Using Lemma~\ref{lem:depsilon}, we derive a bound on $\delta_p$ as follows: 

\begin{eqnarray*}
\delta_p &=& \frac{w_1}{w_p} \delta_1' \prod_i
\frac{\cos(\theta_i'+\delta_i')}{\cos(\theta_{i+1}+\delta_{i+1})} \hspace{0.3in} \text{(from (\ref{eq:deltan}))}\\
&\le& \frac{w_1}{w_p} \delta_1' \prod_i
\frac{\cos(\theta_i')-\delta_i'(\sin(\theta_i'))}{\cos\theta_{i+1}'-\delta_{i+1}\sin\theta_{i+1}}\\
&\le& \frac{w_1}{w_p} \delta_1' \prod_i
\frac{\cos{\theta_i'}}{\cos{\theta_{i+1}}-\delta_{i+1}} \\
&= &\frac{w_1}{w_p} \delta_1' (\prod_i \frac{\cos\theta_i'}{\cos\theta_{i+1}})
(\prod_j(1-\frac{\delta_j}{\cos{\theta_j}})^{-1})\\
&\le &\frac{w_1}{w_p} \delta_1' (\prod_i
\frac{\cos\theta_i'}{\cos\theta_{i+1}}) (\prod_j
(1+\frac{\delta_j}{\cos\theta_j}+ o(\frac{\delta_j}{\cos\theta_j}))\\
&\le& \frac{w_1}{w_p} \delta_1' (\prod_i
\frac{\cos\theta_i'}{\cos\theta_{i+1}})
(1+\sum_j\frac{\delta_j}{\cos\theta_j} +
o(\sum_j \frac{\delta_j}{\cos\theta_j}))\\
\end{eqnarray*}

Let $\theta_1'+\gamma_1', \theta_2'+\gamma_2', \ldots, \theta_{p}'+\gamma_{p}'$ be the angles at which the ray $r$ is incident on edges $e_1, e_2, \ldots, e_{p}$ respectively.
Also, let $\theta_2+\gamma_2, \theta_3+\gamma_3, \ldots, \theta_{p_1}+\gamma_{p+1}$ be the angles at which the ray $r$ refracts from edges $e_1, e_2, \ldots, e_p$ respectively.
Then analogous to the above, we can lower bound the angle at which ray $r$ strikes edge $e_j$ as follows: 
$\gamma_p \ge \frac{w_1}{w_p} \gamma_1' (\Pi_i\frac{\cos\theta_i'}{\cos\theta_{i+1}})
(1-\sum_j\frac{\gamma_j}{\cos\theta'_j} + o(\sum_j\frac{\gamma_j}{\cos\theta'_j}))$.
For every $2 \le i \le p$, let $f_i$ be the face bounded by $e_{i-1}$ and $e_i$.
Also, for every $2 \le i \le p$, let $\delta_j^I$ be the angle between the vectors induced by $f_j \cap r_I$ and $f_j \cap r'$.
The angle $\delta_p^I$ is upper bounded by 
$\frac{w_1}{w_p} \delta_1' \gamma(\prod_i
\frac{\cos\theta_i'}{\cos\theta_{i+1}})
(1+\sum_j\frac{\delta^I_j}{\cos\theta_j} +
o(\sum_j \frac{\delta^I_j}{\cos\theta_j})$.
Hence, the ratio of the  error in the angle $\delta^I_p$ versus the angle $\gamma_p$, denoted with $R_p$ equals to  
\begin{eqnarray*}
& & \frac{\delta^I_p - \gamma_p}{\gamma_p} \\ 
&\le& \frac{(\sum_i \frac{\delta_i\gamma}{\cos\theta_i}- \frac{\gamma_i}{\cos\theta'_i}) +
o(\sum_j\frac{\delta_j}{\cos\theta_j}  ))}{1+\sum_j
\frac{\gamma_j}{\cos\theta'_j}}  \\
&\le& (\sum_i \frac{2\delta_i}{\cos\theta_i}) + o(\sum_i
\frac{\delta_i}{\cos\theta_i})  \\
&\le& n^2 \max_i \frac{2\delta_i}{\cos{\theta_i}} + o(\sum_i
\frac{\delta_i}{\cos\theta_i}) \  \   \mbox{ since}  \ p \leq n^2\\
&\le& \frac{\epsilon}{n^2} \quad  \   \mbox{when}   \  \max_i \frac{2\delta_i}{\cos{\theta_i}} \le
\frac{\epsilon}{n^4}
\end{eqnarray*}
Therefore, to keep the error ratio less than $\epsilon$, we choose $\delta_1' \le \frac{\epsilon}{2n^{n+2}}$ since $\min_i \cos{\theta_i} \geq \frac{1}{n^n}$. 
(In the last inequality, we used the assumption that $\theta'_{i} \leq \frac{\pi}{2} - \frac{1}{n^n}$).
In fact, we choose $\delta_1'$ less than or equal to $\frac{\epsilon}{2n^{n+4}}$, so that to upper bound $R_p$ with $\epsilon$ as well as to upper bound $\Vert q_I-q \Vert$.

Also, let $q'_1, q'_2, \ldots, q'_p=q_I$ respectively be the points of incidence of interpolated ray $r_I$ on edges $e_1, e_2, \ldots, e_p$ in $\calE$.
Considering that the ray $r$ is traced across the edge sequence of cardinality $O(n^2)$, the $\Vert q_Iq \Vert$ is bounded recursively.
Let $dist_r$ be the distance along ray $r$ from $v$ to $q$ and let $dist_{r_I}$ be equal to $\sum_{i=2}^{p} \Vert q'_{i-1}q'_i \Vert$.
Since the error in angle accumulates, $\frac{dist_{r_I} - dist_r}{dist_r}$ is upper bounded by $\frac{\epsilon}{n^2}$.
\end{proof}

In the interpolation method, the binary search is performed on the ordered set of rays, $\calR (v)$, at vertex $v$ until the angle between the rays guiding the binary search is less than or equal to $\frac{\epsilon}{n^{n+4}}$.
This results in $O(n (\lg \frac{n}{\epsilon}))$ search steps, where each step would require tracing a ray from the source.
Moreover, this method can also be used to approximate the shortest distance to an edge via rays in a ray bundle.
We summarize the discussion in the following Lemma.
\begin{theorem}
\label{thm:interpol}
The interpolation method determines the shortest distance to a given point $p$ from a source $v$ to within a multiplicative approximation factor of $(1+\epsilon)$ in $O(n\lg(\frac{n}{\epsilon}))$ time using rays from a ray bundle. 
Furthermore, an $(1+\epsilon)$-approximation to the shortest distance from $v$ to an edge $e$  can be determined in $O(n(\lg{\frac{n}{\epsilon}})(\lg{\frac{\mu}{\sqrt{\epsilon}}}))$ time using rays from a ray bundle.
\end{theorem}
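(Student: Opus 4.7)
The plan is to combine a binary search over the angularly-ordered set $\calR(v)$ with the interpolation mechanism of Lemma~\ref{lem:interpolate}. For the first claim, given a query point $p$ lying on the last edge $e$ in the edge sequence $\calE$ associated with some ray bundle $B$, I would maintain at each step a sibling pair $(r',r'')$ that angularly brackets the desired ray, together with their strike points $q',q''$ on $e$. The interpolation formula yields in constant time the candidate strike point $q_I = q' + \gamma\Vert q'q''\Vert$ along $e$, where $\gamma$ depends on the position of $p$; by comparing $q_I$ with $p$ we halve the angular interval and recurse. The recursion terminates once the angular gap between the two guiding rays at $v$ drops below $\frac{\epsilon}{2n^{n+4}}$, the precondition of Lemma~\ref{lem:interpolate}. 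Since the initial angular gap is at most $\pi$, this requires $O(\lg(n^{n+4}/\epsilon)) = O(n \lg(n/\epsilon))$ halvings at $O(1)$ cost each, matching the claimed bound. Lemma~\ref{lem:interpolate} then guarantees that the weighted distance along the interpolated ray approximates the true weighted distance along the ray through $p$ within a factor of $(1+\epsilon/n^2) \subseteq (1+\epsilon)$.

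For the second claim, I would reduce the approximate shortest-distance-to-edge query to a one-dimensional optimization over candidate target points on $e$. The weighted distance from $v$ via rays in a single bundle, expressed as a function of the strike point along $e$, is convex (following from the convexity of refracted-path lengths under Snell's law), so binary search converges to its minimum. Each outer step invokes the point-query routine above at cost $O(n \lg(n/\epsilon))$. The outer search must refine the candidate to relative accuracy $\Theta(\sqrt{\epsilon}/\mu)$ along $e$ in order for the objective value at the candidate to lie within a $(1+\epsilon)$ factor of the true minimum: the $\sqrt{\epsilon}$ scaling is forced by the quadratic behavior of a convex objective near its minimum, and the factor $\mu$ normalizes against the worst-case ratio of triangle weights that controls the slope. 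This yields $O(\lg(\mu/\sqrt{\epsilon}))$ outer iterations and the overall $O(n(\lg\frac{n}{\epsilon})(\lg\frac{\mu}{\sqrt{\epsilon}}))$ bound.

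The main obstacle is justifying that the accumulated interpolation error along an edge sequence of length $O(n^2)$ remains within the $(1+\epsilon)$ budget, while simultaneously respecting the preconditions of Lemma~\ref{lem:interpolate}: the mild $\pi - 1/n^n$ restriction on successive-ray angles and the $\frac{\epsilon}{2n^{n+4}}$ angular threshold at binary-search termination. Both are already controlled by Lemma~\ref{lem:interpolate}, so the work here is largely bookkeeping: plugging in the termination condition, verifying that each binary-search step uses only the cached endpoint strike points and hence costs $O(1)$, and combining the inner and outer binary-search costs. A secondary subtlety is the convexity assertion used for the edge query, which must be stated carefully since it holds only within one bundle (i.e., when the edge sequence does not change as $p$ varies on $e$); this is exactly where the split-of-sibling-pair machinery of Section~\ref{subsect:splitsibpair} becomes necessary to keep the objective convex over each maximal sub-interval of $e$.
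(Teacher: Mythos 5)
Your proposal matches the paper's (very terse) argument: the paper likewise performs a binary search over the angle-ordered rays at $v$, using the $O(1)$-time interpolation of Lemma~\ref{lem:interpolate} in place of ray tracing, and stops once the guiding angle drops below the $\frac{\epsilon}{n^{n+4}}$ threshold, yielding $O(n\lg\frac{n}{\epsilon})$ steps. For the edge query the paper offers only a one-line remark that the same method applies, so your outer binary search with the convexity/unimodality justification (valid within a single bundle, as you note) is a legitimate filling-in of the $\lg\frac{\mu}{\sqrt{\epsilon}}$ factor rather than a departure from the paper's route.
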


\section{Analysis}
\label{sect:analysis}

The analysis is based on showing the following facts:
(i) bundles of rays are correctly maintained, and
(ii) the shortest distances to vertices, and edges similarly, are correctly computed using the bundles. 

Bundles are initiated from vertices and propagated across the faces of the domain as they strike the 
edges of the domain. The propagation is evidently correct, the modifications to the bundles
being (a) bundle splits and (b) elimination of bundles due to crossing of bundles.
As described in Subsection \ref{subsect:extendrays}, elimination of bundles is
determined at vertices when more than one of the bundles split at that vertex.
Note that bundles are propagated until they split and further propagation of the split parts of the bundles occurs when the shortest distance event, of weighted distance $d$, corresponding to the bundle, say $B$,  striking an edge is determined by the heap. 
When two bundles cross each other completely, one of them is eliminated from further consideration.
Thus for the rest of the proof, we will assume w.l.o.g., that bundles are correctly maintained.

To determine approximate weighted shortest paths to vertices correctly, we consider two categories of paths, Type-I and Type-II. 
Recall that Type-I paths do not have critically reflected segments in between, while Type-II paths does have. 
Analogous arguments prove the correctness involved in computing approximate weighted shortest paths to edges.

\begin{lemma}
\label{lem:type1find}
The algorithm correctly determines sibling pair in $\calT_R (u)$ for every source $u$, and computes an $(1+\epsilon)$-approximation to a Type-I weighted shortest path from $u$ to $v$, for any vertex $v$ in $\mathcal{P}$.
\end{lemma}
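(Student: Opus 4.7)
The plan is to prove both claims by induction on the events processed by the event heap, ordered by weighted Euclidean distance from $s$. The base case is the initialization at $u$ (which occurs when the wavefront first strikes $u$, or at $u=s$ itself): the procedure of Subsection~\ref{subsect:initraysvert} initiates $\calR(u)$ uniformly around $u$ and, for each incident face $f$, identifies the extremal rays $r', r''$ hitting the opposite edge of $f$ by binary search. By construction every ray of $B_f$ shares the singleton edge sequence of that face, so the sibling pair correctly brackets $B_f$ under the definition in Subsection 3.4.

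For the inductive step I would consider an event in which a sibling pair $(r', r'')$ of a bundle $B$ of $\calT_R(u)$ is traced across a face and the two rays land on distinct edges, forcing a split. By the divergence Lemma (the unlabeled lemma at the end of Subsection 3.4) the rays of $B$ are pairwise divergent, hence angle-ordered along each edge in their common edge sequence. This angle order is precisely what the splitting procedure of Subsubsection~\ref{subsubsect:splittreeofrays} exploits: binary searching over the critical-ancestor paths of $r'$ and $r''$ in $\calT_R(u)$ locates a successive pair $(r_1, r_2)$ witnessing the first divergence of edge sequences, and the two new sibling pairs $(r', r_1)$ and $(r_2, r'')$ together partition the rays of $B$ into bundles $B_1, B_2$ whose members share a common edge sequence on the new face. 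The elimination rule in Subsection~\ref{subsect:extendrays}, which discards a bundle when it crosses another bundle from the same source at a vertex, is justified directly by Proposition~\ref{prop:noncrossing}: the discarded bundle cannot carry a geodesic from $u$ (only the inner bundle to a vertex does). Processing events in heap order ensures that the bundle carrying the shortest distance to any vertex is fixed before that vertex initiates its own wavefront, so the invariant that every ray of $\calT_R(u)$ belongs to exactly one current bundle and lies between its sibling pair is preserved throughout.

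Given correctness of bundle maintenance, I turn to the approximation claim. Fix any Type-I weighted shortest path $P_{opt}$ from $u$ to a vertex $v$. Lemma~\ref{lem:refrnoncritical} guarantees the existence of a ray $\hat r \in \calT_R(u)$ whose edge sequence agrees with that of $P_{opt}$ and whose endpoint at an edge incident to $v$ (possibly given by Corollary~\ref{cor:errorX}) satisfies a weighted-distance approximation of $(1 + 2n^2 \eta \epsilon')$. By the bundle invariant, $\hat r$ lies in some bundle $B$ between its sibling pair, and $B$ (or a descendant of $B$ produced by subsequent splits) will be propagated along this edge sequence and eventually trigger a split or vertex event at $v$. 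At that event, Theorem~\ref{thm:interpol} uses the two sibling rays to interpolate the weighted distance from $u$ to $v$ via $\hat r$ to within an additional factor $(1+\epsilon/n^2)$. With $\epsilon'$ chosen as in the theorem at the end of Section~\ref{sect:boundrays}, the combined multiplicative error is absorbed into $(1+\epsilon)$; taking the minimum over all bundles of $\calT_R(u)$ that reach $v$ gives the algorithm's reported distance, which is therefore a valid $(1+\epsilon)$-approximation.

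The main obstacle I anticipate is the bookkeeping required in the inductive step: one must argue that the bundle containing $\hat r$ is never discarded by the crossing-elimination rule before it reaches $v$, and that every split along its edge sequence places $\hat r$ in the correct descendant bundle. The first point reduces cleanly to Proposition~\ref{prop:noncrossing} applied to the geodesic property of $\hat r$; the second point uses the angle-ordering provided by pairwise divergence together with the correctness of the binary search on the critical-ancestor path. Once these invariants are carefully stated and maintained through the event loop, the existence statement of Lemma~\ref{lem:refrnoncritical} combined with the interpolation bound of Theorem~\ref{thm:interpol} yields the lemma.
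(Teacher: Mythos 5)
Your proposal is correct and follows essentially the same route as the paper's proof: correct initialization of sibling pairs at the source, correct splitting via binary search with interpolation (distinguishing the same-origin case from the critical-ancestor-path case), existence of an approximating ray from Lemma~\ref{lem:refrnoncritical}, and interpolation accuracy from Lemma~\ref{lem:interpolate}, with the non-crossing property justifying bundle elimination. Your explicit induction on heap events and the stated invariant that the bundle containing the approximating ray survives every split is a more careful organization of the same argument the paper gives more tersely.
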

\begin{proof}
Suppose that there exists a Type-I weighted shortest path between $u$ and $v$, traversing an edge sequence ${\calE}$.
Lemma~\ref{lem:refrnoncritical} shows that an $(1+\epsilon)$ approximation can be found using rays within
a bundle. Bundles are maintained using sibling pair.
We show that the algorithm maintains sibling pair for edges in ${\cal E}$ that determine bundles, the sibling pairs being traced rays such that these rays can be used to find an $(1+\epsilon)$-approximate weighted shortest path from $u$ to $v$.

Initially when rays are generated from a vertex, $v$, siblings are computed correctly and each ray is refracted correctly.
Consider the procedure that computes a sibling pair $r_1, r_2$ closest to a vertex $v$.
The rays $r_1$ and $r_2$ are thus two successive rays and the distance to $v$ via rays $r_1$ and $r_2$ is updated.
Suppose both the rays $r_1$ and $r_2$ have the same origin.
Using binary search with interpolation, we find the siblings correctly.
Otherwise, the origin of $r_1$ is different from the origin of $r_2$.
In this case, we find a sibling pair via the binary search with interpolation on the rays in the critical ancestor paths of two siblings, say $r'$ and $r''$ such that the rays $r_1$ and $r_2$  are part of the set of rays lying in between $r'$ and $r''$.
The correctness of the interpolation method is shown in Lemma~\ref{lem:interpolate}.
\end{proof}

\begin{lemma}
\label{lem:refrcritical-find}
Let $P$ be a Type-II weighted shortest path from a vertex $v$ to another vertex $w$ on $\calP$ with a critical segment ${\kappa}$ in-between.
Then the algorithm determines a pair of traced rays  in $\calR(v)$ that can approximate a weighted shortest path, $P$. 
\end{lemma}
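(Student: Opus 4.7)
The plan is to reduce the statement to the already-established Lemma~\ref{lem:refrcritical} (the approximation existence result) plus Lemma~\ref{lem:type1find} (the algorithmic correctness for Type-I paths), and then argue that the algorithm's handling of the critical segment $\kappa$ faithfully realizes the construction in the existential proof. Concretely, I would decompose the Type-II shortest path $P$ as $P = P_1 \cdot P_2$, where $P_1$ goes from $v$ to a critical point of entry $y$ on $\kappa$ and $P_2$ is a reflected continuation that exits $\kappa$ at some point $p$ and then proceeds (as a Type-I sub-path) to $w$. If several critical segments appear along $P$, I would split $P$ into at most $O(n^2)$ such pieces (using Proposition~\ref{prop:betwcrit} and Proposition~\ref{prop:edgeseqlen}) and argue inductively, so it suffices to handle a single $\kappa$.

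For $P_1$, I would invoke Lemma~\ref{lem:type1find} to conclude that $\calT_R(v)$ contains a traced sibling pair $r,r'$ of some bundle $B$ in $\calT_R(v)$ whose edge sequence ends at the edge $e$ carrying $\kappa$, with one of the rays (say $r'$) critically incident on $e$ at a point $y'$ that is within weighted distance $d(v,y)(1+\eta\epsilon')$ of $y$, by Corollary~\ref{cor:errorX}. By Subsection~\ref{subsect:crit} the algorithm, upon processing the heap event for this critical incidence, inserts $y'$ as a critical source in $\calT_R(v)$, initiates the set $\calR(\kappa)$ of parallel critically-reflected rays spaced $\epsilon'$ apart along $\kappa$, and initiates the Steiner set $\calR(y')$ filling the cone left by the discretization. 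Thus the first half of $P$ is witnessed by a pair of traced rays with the error bound guaranteed by Lemma~\ref{lem:refrnoncritical}.

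For $P_2$, the reflected sub-path exits $\kappa$ at $p$, travels parallel to $P_1$'s refracted direction up to $p$, and proceeds through an edge sequence $\calE_2$ to $w$. Because $\calR(\kappa)$ is a set of mutually parallel rays whose origins straddle $p$ at Euclidean spacing $\le \epsilon'$, there exist successive origins $p',p'' \in \kappa$ with $p \in [p',p'']$; the algorithm locates them via the binary-search/split procedure of Subsection~\ref{subsubsect:splitcritseg}, which by construction yields a sibling pair $r_1',r_2' \in \calR(\kappa)$ whose traced edge sequence agrees with $\calE_2$ at the stage where they reach $w$. Applying a translation-type argument (parallel rays remain parallel after refraction through identical edge sequences), the weighted length along either of these traced rays from $\kappa$ to (a neighbor of) $w$ differs from $|P_2|$ by at most a $(1+\eta\epsilon')$ factor, exactly as in the existence proof of Lemma~\ref{lem:refrcritical}.

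Combining the two halves via the triangle inequality and choosing $\epsilon'$ as in the statement of the main theorem gives the overall $(1+\epsilon)$-approximation, and the output pair of traced rays in $\calR(v)$ is the sibling pair $(r_1',r_2')$ together with the inherited critical-ancestor-path pointers back to $v$. The main obstacle I anticipate is the bookkeeping for the event that actually ``discovers'' this pair: one must check that the heap-ordered event-driven algorithm does not discard $(r_1',r_2')$ due to the non-crossing elimination rule of Subsection~\ref{subsect:extendrays} before $w$ is reached. This requires noting that if another bundle $B''$ preempts the split at some intermediate vertex, then by the non-crossing property (Proposition~\ref{prop:noncrossing}) the surviving bundle provides an equally good or better approximation, so the claimed approximation bound is preserved regardless of which bundle survives the elimination.
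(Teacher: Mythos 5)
Your proposal is correct and follows essentially the same route as the paper's proof: decompose $P$ into $P_1$ (from $v$ to the critical point of entry on $\kappa$, handled by the Type-I machinery of Lemmas~\ref{lem:refrnoncritical}/\ref{lem:type1find}) and $P_2$ (handled by the $\epsilon'$-spaced parallel rays of $\calR(\kappa)$ straddling the exit point, located by binary search/interpolation as in Lemma~\ref{lem:refrcritical}). Your additional care about the non-crossing elimination rule not discarding the witnessing bundle is a point the paper defers to the general ``bundles are correctly maintained'' assumption at the start of Section~\ref{sect:analysis}, but it does not change the argument.
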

\begin{proof}
An optimal path $P$ can be partitioned into sub-paths, each such sub-path going from one vertex $v$ to another vertex $w$. 
Suppose a sub-path uses a critical segment $\kappa$ and is partitioned as follows: a path $P_1$ from $v$ to $\kappa$ and a path $P_2$ from $\kappa$ to $w$.
Let $e$ be the edge on a face $f$ such that the critical segment $\kappa$ lies  on $e$ and reflects rays back onto face $f$. 
Also, let $y$ be the critical point of entry into $\kappa$.
The correct determination of the critical point of incidence from a bundle of rays follows from
Lemma~\ref{lem:refrnoncritical}.
If $w$ is the endpoint of $\kappa$ then we are done since the distance to the endpoint from the critical source is included in consideration.
Otherwise rays are generated from  $\kappa$ that are parallel and separated by a small weighted Euclidean distance  less than $\epsilon'$.
Let $r_1, r_2$ be sibling rays that originate from $\kappa$ such that $w$ lies in between $r_1$ and $r_2$.
Lemma~\ref{lem:refrcritical} shows that the shortest distance from $\kappa$ to $w$ can
be found by tracing the rays $r_1$ and $r_2$ and interpolating between them to find the  point
on $\kappa$ closest to $w$.
\end{proof}

\begin{theorem}
\label{thm:timecompl} 
The algorithm computes an $(1+\epsilon)$-approximate weighted shortest path from $s$ to $t$ in 
$O(n^5(\lg{\frac{n}{\epsilon}})(\lg{\frac{\mu}{\sqrt{\epsilon}}}))$ time. 
\end{theorem}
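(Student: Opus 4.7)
The plan is to split the argument into a correctness component and a running time component, and lean heavily on results already established. Correctness reduces to what is proved in Lemmas~\ref{lem:type1find} and \ref{lem:refrcritical-find} plus the approximation theorem at the end of Section~\ref{sect:boundrays}: every sub-path of the optimum is Type-I or Type-II, each such sub-path between a pair of vertices of $\calP$ is $(1+\epsilon')$-approximated by a traced sibling pair, and there are at most $O(n)$ such sub-paths, so choosing $\epsilon'$ as in the approximation theorem delivers the $(1+\epsilon)$ bound. The remaining task is therefore to count events and bound the cost of processing each.

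For the event count I would first enumerate the sources: $O(n)$ vertex sources, and by Proposition~\ref{prop:numcritsrc} at most $O(n)$ critical points of entry per edge hence $O(n^2)$ critical sources overall, and correspondingly $O(n^2)$ critical segments. For each source, bundles are propagated across an edge sequence of cardinality $O(n^2)$ by Proposition~\ref{prop:edgeseqlen}. A bundle terminates either by reaching $t$, or by splitting at a vertex, or by being eliminated via the non-crossing property (Proposition~\ref{prop:noncrossing}) that is used in Subsections~\ref{subsect:initraysvert} and \ref{subsect:extendrays}. Because at most one bundle from a given origin can cross through any vertex, the total number of bundle-split and bundle-elimination events is $O(n^4)$, matching the $\Omega(n^4)$ lower bound of \cite{journals/jacm/MitchellP91}. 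Refraction events within a bundle produce no extra work beyond updating a sibling pair, and Steiner-ray initiations at critical sources are charged to the $O(n^2)$ critical sources.

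For the cost per event I would invoke Theorem~\ref{thm:interpol}: the dominant event work is a binary search over the angular order of rays (for a split of a bundle of $\calT_R(v)$) or a linear-order search along a critical segment, implemented using interpolation rather than tracing. In both cases the search cost is bounded by the cost of computing the approximate distance from the source to an edge, namely $O(n(\lg\frac{n}{\epsilon})(\lg\frac{\mu}{\sqrt{\epsilon}}))$. Heap operations on an event queue of size $O(n^4)$ contribute only an additive $O(\lg n)$ per event, which is absorbed into the logarithmic factors. Multiplying the event count by the per-event cost yields $O(n^4)\cdot O(n(\lg\frac{n}{\epsilon})(\lg\frac{\mu}{\sqrt{\epsilon}})) = O(n^5(\lg\frac{n}{\epsilon})(\lg\frac{\mu}{\sqrt{\epsilon}}))$.

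The main obstacle will be the bookkeeping in the event count: in particular, showing that although each of the $O(n)$ wavefronts can in principle spawn many bundles through critical incidences, the non-crossing property together with the per-vertex de-duplication rule in Subsection~\ref{subsect:initraysvert} keeps the aggregate number of bundle-splits at $O(n^4)$ rather than a higher polynomial. A secondary issue is to confirm that on a bundle split from a tree of rays the binary search described in Subsection~\ref{subsubsect:splittreeofrays}, which traverses a critical ancestor path of length $O(n^2)$, is nonetheless dominated by the $O(n(\lg\frac{n}{\epsilon})(\lg\frac{\mu}{\sqrt{\epsilon}}))$ interpolation bound of Theorem~\ref{thm:interpol} because the search proceeds by angular ordering and invokes interpolation only at $O(\lg\frac{n}{\epsilon})$ probed rays.
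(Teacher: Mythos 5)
Your overall architecture matches the paper's: correctness is delegated to Lemmas~\ref{lem:type1find} and \ref{lem:refrcritical-find} (with the approximation theorem of Section~\ref{sect:boundrays} supplying the choice of $\epsilon'$), and the running time is obtained as an event count multiplied by the per-event search cost of $O(n(\lg\frac{n}{\epsilon})(\lg\frac{\mu}{\sqrt{\epsilon}}))$ from Theorem~\ref{thm:interpol}. The final multiplication is the same as the paper's.

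The genuine gap is in where the factor $n^4$ comes from. You attribute it to ``$O(n^4)$ bundle-split and bundle-elimination events,'' justified by the $\Omega(n^4)$ figure of Mitchell--Papadimitriou; but that figure counts shortest-path-map events, not splits, and your own stated reason (at most one bundle per origin through any vertex) would give $O(n^2)\cdot O(n)=O(n^3)$, not $O(n^4)$. More importantly, if there really were $O(n^4)$ splits then there would be $O(n^4)$ bundles, each propagated across an edge sequence of length $O(n^2)$ with an $O(n\lg\frac{n}{\epsilon}\lg\frac{\mu}{\sqrt{\epsilon}})$ search per edge, and the total would far exceed $n^5$. The paper's accounting is different and is the load-bearing step you flag as your ``main obstacle'': at each vertex only the single bundle realizing the approximate shortest distance there is split (the increase being charged to that vertex), crossing bundles are eliminated, so the total number of splits is $O(n)$ and the total number of bundles stays at $O(n^2)$ ($O(n)$ from vertices, $O(n^2)$ from critical segments/sources via Proposition~\ref{prop:numcritsrc}). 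The $n^4$ is then (number of sources, $O(n^2)$) times (edge-sequence length, $O(n^2)$ by Proposition~\ref{prop:edgeseqlen}), i.e., the number of (bundle, edge) incidences at which a distance-to-edge search is performed --- the paper states this as $O(n^3\lg\frac{n}{\epsilon}\lg\frac{\mu}{\sqrt{\epsilon}})$ work per source. Your secondary concern about the critical-ancestor-path search is resolved as you guessed: the binary search over the $O(n^2)$ nodes of the critical path costs only $O(\lg n)$ ray tracings before the interpolation search within a single $\calR(v)$ takes over.
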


\begin{proof}
The correctness follows from Lemmas~\ref{lem:type1find} and \ref{lem:refrcritical-find}.

We first bound the number of ray bundles.
The number of bundles that are initiated from vertex sources are $O(n)$. 
By Proposition 3, the number of bundles that are initiated from critical segments is bounded by $O(n^2)$.
The processing of such bundles will be handled separately.

The number of bundles increase when split. 
The increase is charged to the vertex $v$ that causes the split. 
Only one bundle, the bundle that determines an approximate weighted shortest distance to $v$ is split and thus the total number of bundles split are $O(n)$.
Bundles are propagated further into adjacent faces after striking an edge.

We thus need to analyze the time involved in finding an approximate weighted shortest paths and splitting ray bundles.
Let $S$ be the set of rays from a vertex source $v_1$.
To search for  a  ray in $S$ that is closest to the vertex or provides the shortest distance to the last edge $e$ of an edge sequence $\calE$, the algorithm utilizes the combination of binary search and interpolation (as described in Section~\ref{sect:interpol}).
Using Theorem~\ref{thm:interpol}, the complexity of determining the shortest distance  when  rays from a specific vertex are considered
is equal to $O(n^3 \lg (\frac{n}{\epsilon}) \lg (\frac{\mu}{\sqrt{\epsilon}}))$.
Note that $O(n^2)$ edges will be encountered as rays are traced.

Next, let us consider the  case in which the sibling pair is not from a vertex but arises from a  critical source
contained in a tree of rays, say $\calT_R(u)$.
Let $S$ be the set consisting of all the critical points of entry in any critical ancestor path $P$ in $\calT_R(u)$.
A binary search on the critical path determines vertex $u$ such that the ray to be determined is in $\calR(u)$.
Since $S$ is $O(n^2)$ in size and since it is ensured that no two ray bundles cross, binary search can be used to find a vertex $v$  in $\calT_R(u)$ such that an approximate weighted shortest path lies in $\calR(v)$ with $O(\lg (n))$ ray tracings.
A search on the rays from that specific vertex in $\calR(u)$ follows as described in the earlier paragraph. 

A similar search is required to determine the ray that identifies the approximate weighted shortest distance to an edge, which is required when a bundle splits.
The procedure and time complexity of this is similar to the determination of a weighted shortest path to a vertex. 

Since the total number of vertices and critical sources is $O(n^2)$, the total work in splitting and initiating ray bundles from these sources is $O(n^5 \lg (\frac{n}{\epsilon}) \lg (\frac{\mu}{\sqrt{\epsilon}}))$, as claimed in Theorem~\ref{thm:interpol}.

Determining the ray originating from a critical segment and that strikes  a vertex can be determined via interpolation (or a binary search) on the space of the parallel rays that are part of the bundle that originates from a  critical segment.
The work involved in tracing a ray that originates at a critical segment and traverses an edge sequence takes $O(n^2)$ time (Proposition~\ref{prop:edgeseqlen}).
The specific pair of successive rays of interest can be found by interpolation, taking $O(1)$ time.
As there are $O(n)$ vertices and $O(n^2)$ critical segments (Proposition~\ref{prop:numcritsrc}), the time complexity is $O(n^5\lg(\frac{\mu}{\sqrt{\epsilon}}))$, including the binary search involved.
%
\end{proof}

\subsection*{Single-source approximate weighted shortest path queries}
\label{subsect:sssp}

Here we devise an algorithm to preprocess $\cal{P}$ to construct a {\it shortest path map} so that for any given query point $q$ in $\cal{P}$ an approximate weighted shortest path from $s$ can be computed efficiently.
As part of preprocessing, using the non-crossing property of shortest paths (Proposition~\ref{prop:noncrossing}), we compute for every edge $e$ in the subdivision a minimum cardinality set ${\cal B}(e)$ of bundles such that for every point $p \in e$ an approximate weighted shortest path can be computed from a bundle $B \in {\cal B}(e)$.
Further, for every edge $e$, the set ${\cal B}(e)$ of bundles are saved with $e$.
For a query point $q \in {\cal P}$, exploiting Lemma~\ref{lem:errorX} and Corollary~\ref{cor:errorX}, we locate $q$ in the triangulation and determine an approximate weighted shortest path from $q$ to $s$ via one of the bundles associated to edges of the triangle containing $q$.

As mentioned, let $N$ be the maximum coordinate value used in describing $\calP$.
Since the number of bundles is $O(n^2)$ and since it takes $O(n^2  (\lg{\frac{n}{\epsilon}})(\lg{\frac{\mu}{\sqrt{\epsilon}}}) (\lg{N}))$ time to determine the bundles that need to be associated to an edge and there $O(n)$ edges, it requires $O(n^5 (\lg{\frac{n}{\epsilon}}) (\lg{\frac{\mu}{\sqrt{\epsilon}}})(\lg{N}))$ time for preprocessing.
The query phase takes $O(n^4 (\lg{\frac{n}{\epsilon}})( \lg{\frac{\mu}{\sqrt{\epsilon}}})(\lg{N}))$ time: there could be $O(n^2)$ bundles associated to edges of the triangle containing $q$ and finding an approximate weighted shortest path via any one such bundle takes $O(n^2(\lg{\frac{n}{\epsilon}})(\lg{\frac{\mu}{\sqrt{\epsilon}}})(\lg{N}))$ time.

\section{Conclusions}
\label{sect:conclu}

In this paper we have presented a polynomial-time algorithm for finding an approximate weighted shortest path between two given points $s$ and $t$.
The main ideas of this algorithm rely on progressing the discretized wavefront from $s$ to $t$. 
The time complexity of our algorithm is $O(n^4(n \lg{n} + \text{ polylog}(\mu, \epsilon, \theta_{min})))$.
Significantly, our algorithm is polynomial with respect to input parameters.
This result is about a cubic factor (in $n$) improvement over the Mitchell and Papadimitriou's '91 result \cite{journals/jacm/MitchellP91} in finding a weighted shortest path between two given points, which is the only known polynomial time algorithm for this problem to date.
In addition, we extend our algorithm to answer single-source weighted shortest path queries.
Further, with minor modifications, our algorithm appears extendable to determine geodesic shortest paths on the surface of a $2$-manifold whose faces are associated with positive weights. 
Since the number of events in the problem stand at $\Omega(n^4)$ (from \cite{journals/jacm/MitchellP91}), it would be interesting to explore further improvements in devising a more efficient polynomial time approximation scheme.

\bibliographystyle{plain}

\bibliography{weireg-sp}

\end{document}